\definecolor{darkred}  {rgb}{0.5,0,0}
\definecolor{darkblue} {rgb}{0,0,0.5}
\definecolor{darkgreen}{rgb}{0,0.5,0}
 \newtheorem{Def}{Definition}
 \newtheorem{Thm}{Theorem}
 \newtheorem{Lem}{Lemma}
\begin{document}
\title{Quantum phase transition from bounded to extensive entanglement entropy in a  frustration-free spin chain}
\author{Zhao Zhang, Amr Ahmadain and Israel Klich\\ Department of Physics, University of Virginia, Charlottesville, 22904, VA}
	
\begin{abstract}
We introduce a continuous family of frustration-free Hamiltonians with exactly solvable ground states. We prove that the  {ground state of our model is non-degenerate and exhibits} a novel  quantum phase transition from bounded entanglement entropy to a massively entangled state with volume entropy scaling. The ground state may be interpreted as a deformation away from the uniform superposition of colored Motzkin paths, showed by Movassagh and Shor \cite{movassagh2015power} to have a large (square-root) but sub-extensive scaling of entanglement into a state with an extensive entropy.
\end{abstract}

\maketitle	

\section{Introduction}

A fundamental question in quantum many-body physics, be it {in} condensed matter or in a  relativistic quantum field theory context, is how much entanglement can be generated by a reasonable Hamiltonian constructed from simple local terms. It is by now well known that generic states in the many-body Hilbert space are extensively entangled with respect to any partitioning of space \cite{page1993average,foong1994proof,sen1996average}. However, most familiar states arising as ground states of known Hamiltonians are significantly less entangled, and feature non-extensive entropy. Low entanglement entropy scaling is essential in our ability to study many quantum systems of interest using conventional computational means, which, in most cases, require splitting the state of the system into smaller blocks. Thus, the scaling of entanglement entropy in known ground states has been an area of intense research for the last two decades (For a recent review of entanglement in condensed matter systems see, e.g. \cite{laflorencie2015quantum}.).

Upper bounds on entanglement entropy scaling often take the form of an \textit{area law} which simply states that the entanglement entropy of subsystem $ A $ grows with the boundary of the entangling region rather than its volume. In one spatial dimension, the area law then implies that entanglement entropy is upper bounded by a constant independent of the size of $ A $. Entanglement scaling has an interesting, although not completely understood, relation with the scaling of the spectral gap,  and thus may give us some clues about the behavior of dispersion in various systems. In particular, for one-dimensional gapped systems, an area law was first proved by Hastings in \cite{hastings2007area}, and an improved bound was presented by Arad et al. in \cite{arad2013area}. 

While the area law has been shown to apply to a wide variety of gapped systems~(see e.g. \cite{eisert2010colloquium}), violations of the area law in the ground state of gapless phases have been also illustrated in several systems. For example, {$(1+1)$}-dimensional conformal field theories~\cite{callan1994high,holzhey1994high,calabrese2009entanglement} exhibit a logarithmic violation of the area law, while Fermi liquids \cite{wolf2006violation, gioev2006entanglement} exhibit logarithmic scaling of entanglement entropy in any dimension. In {one dimensional systems, even more severe violations of the area law have been recently exhibited~\cite{movassagh2015power,gottesman2010entanglement,irani2010ground,vitagliano2010volume,ramirez2014conformal,salberger2016fredkin}. In particular, Movassagh and Shor \cite{movassagh2015power} used a model based on colored Motzkin paths to describe a frustration-free spin-$d$ chain with a unique ground state where $d = 2s+1 $ is the local Hilbert space dimension and $s$ is the number of colors.  They demonstrate that for $s>1$, the entanglement entropy of half a chain, scales as a square root, $ O(\sqrt{n}) $, where  $2n$ is the number of spins. The uncolored case, $s=1$, was a introduced earlier in \cite{bravyi2012criticality}, where it was showed to have a logarithmic scaling $ O(\log{n}) $ of entanglement entropy. 

Motzkin walks, to be defined precisely below, are discrete walks on a lattice going from the origin $(0,0)$ to $(0,2n)$ without passing below the $x$ axis. These correspond to spin configurations where $\sum_{i=1}^m S^z_i\ge 0$ for $m<2n$ while $\sum_{i=1}^{2n} S^z_i= 0$. Loosely speaking, adding color means that a similar condition is satisfied for each `color', leading to further correlations between the two sides of the spin-chain. 

Adding the color degree of freedom is the main ingredient that allows for the enhanced entropy scaling compared to the uncolored model: the uniform superposition of Motzkin walks may be considered as a type of Brownian walk on a half space, whose typical displacement after $n$ steps (i.e. in the middle of the chain) is, therefore, $\sqrt{n}$. There are roughly $s^{\sqrt{n}}$ independent colorings for such a path that must be matched between the two halves of the walk, giving us the entropy $S_n\propto \sqrt{n}$~\cite{movassagh2015power}.

In this paper, we show how one can deform away from these models with a single parameter while maintaining the Hamiltonian frustration-free, non-degenerate, and translation invariant in the bulk. 
\begin{figure} \centering \includegraphics[width=0.7\textwidth]{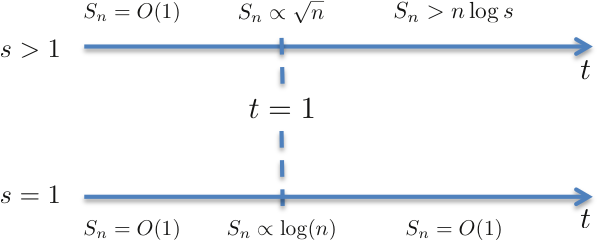} \caption{Entanglement entropy of the first $n$ sites in a chain of $2n$ sites for various phases of the colored and uncolored area-weighted Motzkin state.} \label{phases} \end{figure}
The resultant entropy behavior is depicted in Fig.~\ref{phases}:  For the colored model, we find a phase transition between the maximal scaling violation of the area law (i.e. volume scaling) into no violation at all (i.e. bounded entropy), with a transition through the special point discussed in \cite{movassagh2015power}. In addition, we find a transition between two regions with bound entropy through the critical point studied in \cite{bravyi2012criticality} in the uncolored chain. 

Our construction to increase/decrease entanglement entropy starts with the observation that Motzkin paths that reach a substantial height in the middle of the chain can contribute large color correlations between the chain halves. The idea is illustrated in Fig. \ref{HighLow}. 
\begin{figure} \centering \includegraphics[width=0.8\textwidth]{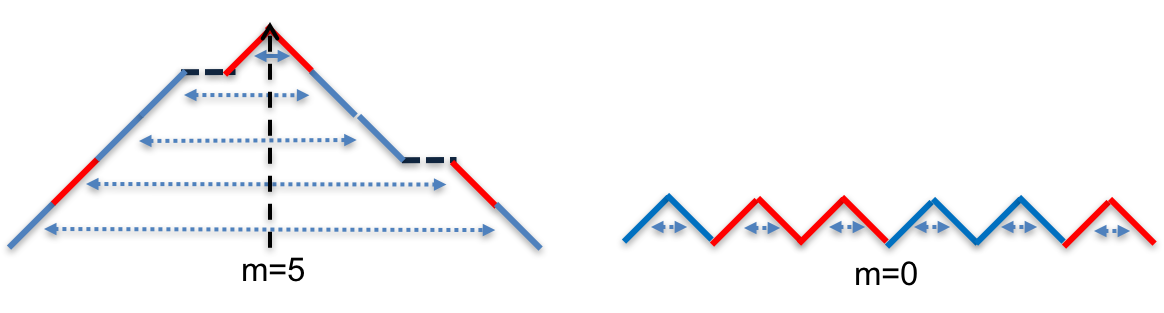} \caption{Motzkin paths require color correlations between the color of an up step and the color of the first step going down at the same height. The Motzkin path reaching height $m=5$ in the middle (left panel) contains more color correlations between the halves of the chain than a path of height $m=0$ (right panel). Favoring higher paths leads to more entanglement.} \label{HighLow} \end{figure}

Remarkably, a suitable wave function, containing a superposition of colored Motzkin paths which prefers high paths can be obtained as a frustration free and non-degenerate ground state of a Hamiltonian which is translational invariant in the bulk. Unfortunately, none of the ingredients in this statement are immediate. A generic change of the Hamiltonian presented in \cite{movassagh2015power} may very easily either break frustration freeness or the non-degeneracy condition nor will it increase entanglement.
Thus, we also need to show that we can do so in a way that the weight of these high paths is large enough as to overcome the contribution from more typical paths that reach height $\sqrt{n}$ at the middle of the chain \footnote{Typical paths of a colored Motzkin walk can be substantially more numerous even when counting possible colorings, at least for $t^2<s$}. 

In particular, here, the uniform superposition of the so-called Motzkin walks in the models of~\cite{bravyi2012criticality,movassagh2015power} is replaced in our model by a weighted superposition according to $t^{\text{area under the path}}$. Thus, higher paths are either exponentially preferred when $t>1$, or suppressed when $t<1$. A caricature of the resulting ground state is shown in Fig.~\ref{MotzkinSuperPos}. 
\begin{figure} \centering \includegraphics[width=0.4\textwidth]{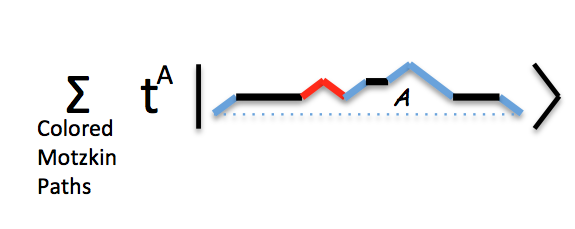} \caption{A caricature of the ground state of our model.} \label{MotzkinSuperPos} \end{figure}

Most of the paper is devoted to a rigorous demonstration that the above ideas can indeed yield the phase diagram shown in Fig.~\ref{phases}. 
In Section \ref{sec:background}, we give some background and describe related work on highly entangled ground states. We proceed in Section \ref{sec:mainres} to describe our construction for the case of a spin-1 chain, prove that the ground state is frustration free and non-degenerate, and extend the result to the model with an arbitrary number of colors. Section \ref{sec:scaling} is devoted to studying the behavior of the entropy associated with our model, and establish the phase diagram in Fig.~\ref{phases}. {We end with some open-ended questions and prospects for future work.} \newline

\subsection{Background and Related Work}\label{sec:background}
Several significant works on spin chains have achieved volume scaling of entanglement entropy~\cite{gottesman2010entanglement,irani2010ground,vitagliano2010volume,ramirez2014conformal}. These advances necessitated trading off translation-invariance, the non-degeneracy of the ground state or using a particularly large number of states per lattice site.  Our new model which is unique in achieving a controlled and intuitively transparent phase transition between volume scaling of entanglement entropy and bounded entropy while simultaneously preserving bulk translation-invariance and uniqueness of the ground state. The first model exhibiting volume scaling of entanglement entropy in a 1D spin chain model was introduced by Irani in \cite{irani2010ground}. This model is frustration-free and translation-invariant and achieves linear scaling of entanglement entropy on some regions of the spin chain; it does so at the expense of having a particularly large local Hilbert space dimension of $d=21$. Independently, Gottesman and Hastings \cite{gottesman2010entanglement}  presented a model of 1D spin chain with an outcome similar to that of Irani in \cite{irani2010ground} where they succeeded in showing linear scaling of entanglement entropy for some blocks of the chain but with a smaller number of states per lattice site, i.e. $d=9$. They did so, however, by explicitly breaking the translation-invariance of the system. Vitagliano et al in \cite{vitagliano2010volume} proposed a model of a frustration-free spin-$1/2$ chain with nearest-neighbor interactions where the entanglement entropy of the ground state scales according to a volume law entropy. The authors in \cite{vitagliano2010volume} achieved volume scaling by explicitly breaking translational invariance and by using real-space RG approach to find a carefully fine-tuned set of coupling constants for the inhomogeneous XX (free fermion) model Hamiltonian. Ramirez et al in \cite{ramirez2014conformal}  generalized the model in \cite{vitagliano2010volume} for 1D spin-$1/2$ critical Hamiltonians by finding, using real-pace RG techniques, a set of exponentially decaying coupling constants that allow the violation of the area law by volume scaling of the entanglement entropy.  

Another recent example is the work by Salberger and Korepin \cite{salberger2016fredkin}  in which they constructed a model of interacting spin-$1/2$ chain that generalizes the work in ~\cite{movassagh2015power} by using Dyck walks instead of Motzkin walks and by expressing the Hamiltonian in terms of Fredkin gates. For their model, the authors in~\cite{salberger2016fredkin} were able to show $ O(\sqrt{n})$ scaling of the entanglement entropy.

In this context, it is important to point out that, as shown in~\cite{movassagh2010unfrustrated}, there are three distinct regimes for Hamiltonians of 1D spin chains whose terms are generic local projectors of fixed rank $ r $: (i) When $ r > d^2/4 $, the Hamiltonian is frustrated for sufficiently large spin chains and analytical as well as numerical work showed that no zero-energy ground states exist, (ii) a regime where $d \leq r \leq d^2/4 $ where many zero-energy ground states are allowed analytically and where numerical investigation suggests that they all carry a large amount of entanglement, and a (iii) frustration-free regime with $ r<d $ where the ground states can be represented by a matrix product state. The Motzkin path-based models first introduced in \cite{bravyi2012criticality} and later with the addition of color in \cite{movassagh2015power} represent a special case where the Hamiltonian turned out to be frustration-free for the special case of $ r=d=3 $. For this reason, the authors in~\cite{bravyi2012criticality} have pointed out that any arbitrary small deformations of the projectors in the Motzkin path Hamiltonian will make it generic and thus throw its ground state into the frustrated regime. In the model presented in this paper, however, we derive a simple equation that relates the weights of local moves at different sites of the chain and thus deform the local projectors away from the uniform case in such a way that frustration-freeness of the Hamiltonian is maintained. 

\section{The Main Result} \label{sec:mainres}
The following theorems are the central result of our work.
\begin{Thm} The following Hamiltonian, acting on a $2n$ sites of a spin-s chain,
 \begin{equation} H(s,t) = \Pi_{boundary}(s) + \sum_{j=1}^{2n-1}\Pi_{j,j+1}(s,t) +\sum_{j=1}^{2n-1} \Pi_{j,j+1}^{cross}(s), \label{cHam} \end{equation} where  \begin{align*} \Pi_{boundary}(s) &= \sum_{k=1}^{s}(|r^k\rangle\langle r^k|_1 + |l^k\rangle\langle l^k|_{2n}) ,\\ \Pi_{j,j+1}(s,t) &= \sum_{k=1}^{s}(|\Phi(t)^k\rangle\langle\Phi(t)^k|_{j,j+1} + |\Psi(t)^k\rangle\langle\Psi(t)^k|_{j,j+1} + |\Theta(t)^k\rangle\langle\Theta(t)^k|_{j,j+1}),\\ \Pi_{j,j+1}^{cross}(s) &= \sum_{k \neq k'}|l^kr^{k'}\rangle\langle l^kr^{k'}|,\end{align*} with \begin{align*} |\Phi^k(t)\rangle &= \frac{1}{\sqrt{1+t^2}}(|l^k0\rangle - t|0l^k\rangle), \\ |\Psi^k(t)\rangle &= \frac{1}{\sqrt{1+t^2}}(|0r^k\rangle - t|r^k0\rangle), \\ |\Theta^k(t)\rangle &= \frac{1}{\sqrt{1+t^2}}(|l^kr^k\rangle - t|00\rangle), \end{align*} has a unique zero energy ground state 
\begin{equation} |GS\rangle = \frac{1}{\mathcal{N}}\sum_{\substack{w \in \{s-colored\\ Motzkin\ walks\}}} t^{\mathcal{A}(w)} |w\rangle, \label{cgs} \end{equation} where $\mathcal{A}(w)$ denotes the area below the Motzkin walk $w$, and $\mathcal{N}$ is a normalization factor. \label{Main model}\end{Thm}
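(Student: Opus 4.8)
The plan is to prove the two assertions separately: that $|GS\rangle$ is a zero-energy state (frustration-freeness), and that the zero-energy subspace is one-dimensional (non-degeneracy). Since $H(s,t)$ is a sum of positive semidefinite terms it satisfies $H\succeq 0$, so any zero-energy state is automatically a ground state, and a state has zero energy if and only if it is annihilated by every term — equivalently, it is orthogonal to each local vector $|\Phi^k\rangle,|\Psi^k\rangle,|\Theta^k\rangle$ on every bulk bond and vanishes on every configuration penalized by $\Pi_{boundary}$ or $\Pi^{cross}$.

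First I would verify frustration-freeness, which is the easy direction. The state $|GS\rangle$ is supported on colored Motzkin walks, which by definition carry no down-step at site $1$, no up-step at site $2n$, and only same-color adjacent peaks $l^kr^k$; hence the diagonal penalties $\Pi_{boundary}$ and $\Pi^{cross}$ annihilate it term by term. For the bulk terms the key observation is that each of the three local relations encoded in $\Phi^k,\Psi^k,\Theta^k$ connects two Motzkin walks whose areas differ by exactly one: sliding an up-step past a flat ($l^k0\leftrightarrow 0l^k$), sliding a down-step past a flat ($0r^k\leftrightarrow r^k0$), or replacing a peak by two flats ($l^kr^k\leftrightarrow 00$) each changes $\mathcal{A}(w)$ by $\pm1$. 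In each case the configuration listed first has one extra unit of area, so its amplitude $t^{\mathcal{A}}$ is exactly $t$ times that of its partner, matching the $1:t$ coefficient structure of $\Phi^k,\Psi^k,\Theta^k$; thus $\langle\Phi^k|GS\rangle=\langle\Psi^k|GS\rangle=\langle\Theta^k|GS\rangle=0$ for every bond and color, and $H|GS\rangle=0$.

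For uniqueness I would write a general zero-energy state as $|\psi\rangle=\sum_c a_c|c\rangle$ in the walk basis and translate the kernel conditions into linear constraints on the amplitudes. The diagonal terms force $a_c=0$ whenever $c$ has a down-step at site $1$, an up-step at site $2n$, or an adjacent mismatched peak $l^kr^{k'}$ with $k\ne k'$. The bulk terms force, for each pair of configurations related by a $\Phi$-, $\Psi$-, or $\Theta$-move, their amplitudes to sit in the fixed ratio $t:1$. Viewing configurations as vertices of a graph with one edge per allowed move, the latter condition says that within each connected component all amplitudes are determined up to a single overall scale by the product of (finite, nonzero, since $t>0$) ratios along connecting paths. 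It therefore suffices to show that (a) the colored Motzkin walks form a single connected, penalty-free component, on which the consistency of the ratio constraints is guaranteed by the solution already exhibited and the amplitudes are pinned to $t^{\mathcal{A}(w)}$ up to scale; and (b) every non-Motzkin configuration lies in a component that contains a penalized configuration, forcing its amplitude to vanish. For (a), the moves manifestly preserve non-negativity, balance, and color-matching, while conversely any Motzkin walk reduces to the all-flats walk by recursively flattening the interior of a matched pair and annihilating the resulting adjacent same-color peak through $\Theta$.

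The hard part is step (b): showing that the local moves together with only these three diagonal penalties carve out \emph{exactly} the colored Motzkin walks. My approach would be to slide the leftmost down-step of a non-Motzkin configuration to the left using $\Psi$; it either reaches site $1$ and triggers the boundary penalty, or it meets an up-step forming an adjacent peak $l^kr^{k'}$ that is either color-mismatched (cross penalty) or same-color and removable by $\Theta$, strictly decreasing the number of steps and enabling an induction. The delicate points to check carefully are configurations with nonzero total displacement and those dipping below the axis only in the interior — one must ensure the sliding/annihilation procedure always terminates at a penalized configuration rather than stalling, and that color-matching is enforced for all matched pairs and not merely adjacent ones (again via first flattening their interior). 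The area-weighting poses no additional obstruction here, since for $t>0$ it only rescales the edge-ratios and degenerates gracefully to the uniform Motzkin state at $t=1$; assembling (a) and (b) then gives $\dim\ker H=1$ with $\ker H=\mathrm{span}\{|GS\rangle\}$.
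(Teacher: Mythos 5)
Your proposal is correct, and its uniqueness half is essentially the paper's argument: view configurations as vertices of a graph whose edges are the local moves $|l^k0\rangle\leftrightarrow|0l^k\rangle$, $|0r^k\rangle\leftrightarrow|r^k0\rangle$, $|l^kr^k\rangle\leftrightarrow|00\rangle$, note that the bulk projectors pin amplitude ratios along edges, that the colored Motzkin walks form one connected component (flattening procedure), and that every other component contains a configuration killed by a boundary or cross projector. Your sliding-the-leftmost-down-step procedure is at the same level of detail as the paper's corresponding remark, and you correctly flag the points (unbalanced walks, non-adjacent color matching) that need the extra care. Where you genuinely diverge is in how existence of the zero-energy state is established. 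The paper does \emph{not} verify the explicit state directly; it first proves (for the uncolored chain with general angles, Theorem~\ref{thm: FF}) that the weight assigned to a walk is independent of the sequence of moves used to build it from the flat walk --- this path-independence is exactly the tuning condition \eqref{tune} --- and only then specializes to $\cot\phi=\tan\psi=\cot\theta=t$ to read off the weight as $t^{\mathcal{A}(w)}$. You instead plug the explicit state \eqref{cgs} into each projector and check annihilation term by term, using the fact that each move changes $\mathcal{A}(w)$ by exactly one unit so the amplitudes sit in the required $1:t$ ratio; consistency of the ratio constraints on the Motzkin component then comes for free from the exhibited solution, and no separate path-independence lemma is needed. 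For Theorem~\ref{Main model} as stated (where the ground state is given explicitly) your route is the more economical one; the paper's consistency argument buys the more general statement of Theorem~\ref{thm: FF}, where the angles are only constrained by \eqref{tune} and the ground state is not written in closed form, so the existence of \emph{any} consistent weight assignment is precisely what must be proved.
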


The phase diagram in Fig.~\ref{phases} is a consequence of the following theorem, with entanglement entropy of the half chain $S_n$:
\begin{Thm} 
The wave function $|GS\rangle$ above has the following behavior of entanglement entropy of half a chain:
\begin{align*}
S_n = \begin{cases} O(n) & \text{if $t>1,s>1$} \\
O(1) & \text{if $t<1$} 
\end{cases}
\end{align*}
\end{Thm}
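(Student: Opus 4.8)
\noindent\emph{Proof strategy.} The plan is to reduce the half-chain entropy to a one-dimensional statistical quantity---the distribution of the walk height at the midpoint---and then analyze that distribution separately in the two regimes. The first step is to write down the Schmidt decomposition of $|GS\rangle$ across the cut between sites $n$ and $n+1$. A colored Motzkin walk $w$ restricted to the left block is a path from height $0$ to some height $m$ that stays nonnegative, carrying $m$ unmatched up-steps whose colors $\vec c=(c_1,\dots,c_m)$ are not cancelled inside the block; for $w$ to be a legal global walk the right block must start at the same height $m$ and its unmatched down-steps must carry exactly the colors $\vec c$ in the matching order. The natural Schmidt label is therefore the pair $(m,\vec c)$. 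Since the area is additive across the cut, $\mathcal A(w)=\mathcal A_L+\mathcal A_R$, the amplitude $t^{\mathcal A(w)}$ factorizes and I can write
\begin{equation*}
|GS\rangle=\frac{1}{\mathcal N}\sum_{m\ge0}\ \sum_{\vec c\in\{1,\dots,s\}^m}\kappa_m\,|L_{m,\vec c}\rangle\otimes|R_{m,\vec c}\rangle,
\end{equation*}
where $|L_{m,\vec c}\rangle,|R_{m,\vec c}\rangle$ are the normalized, $t$-weighted superpositions of left/right half-paths with the prescribed boundary data. These vectors are orthonormal across distinct labels (different heights or colors are perfectly distinguishable), and by the color symmetry of the weight the coefficient $\kappa_m$ depends only on $m$. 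This is exactly a Schmidt decomposition, so the half-chain density matrix has eigenvalue $\lambda_m=\kappa_m^2/\mathcal N^2$ with multiplicity $s^m$.

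Setting $p_m:=s^m\lambda_m$, the probability that the walk has height $m$ at the midpoint, one has $\sum_m p_m=1$ and the entropy collapses to
\begin{equation*}
S_n=-\sum_{m}s^m\lambda_m\log\lambda_m=-\sum_m p_m\log\frac{p_m}{s^m}=H(\{p_m\})+\log s\,\langle m\rangle,
\end{equation*}
with $H$ the Shannon entropy of the height distribution and $\langle m\rangle=\sum_m m\,p_m$ the mean midpoint height. This identity is the engine of the proof: the entanglement is a sum of a fluctuation term $H(\{p_m\})$ and a color term $\log s\,\langle m\rangle$ counting the colored lines that must be matched across the cut. As a consistency check, at $t=1$ the height performs an unbiased excursion with $\langle m\rangle\sim\sqrt n$, reproducing the $O(\sqrt n)$ law of \cite{movassagh2015power}, while at $s=1,t=1$ one gets $S_n=H\sim\tfrac12\log n$, the $O(\log n)$ law of \cite{bravyi2012criticality}. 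The whole problem is thereby reduced to estimating $\langle m\rangle$ and $H(\{p_m\})$.

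For $t<1$ the Gibbs weight $t^{2\mathcal A}$ penalizes area, so the measure should concentrate on walks that hug the axis, and I would show that $p_m$ decays at least exponentially in $m$, uniformly in $n$. The cleanest route is the height transfer matrix $T$ (tridiagonal in the height, with a factor $\sqrt s$ per up/down step and powers of $q=t^2$ recording the area); after a diagonal similarity it becomes a symmetric Jacobi matrix whose couplings decay with height when $q<1$, producing a dominant eigenvector exponentially localized near height $0$ and hence a ratio $p_{m+1}/p_m\le r<1$ independent of $n$. Both $\langle m\rangle$ and $H(\{p_m\})$ are then bounded by $n$-independent constants, giving $S_n=O(1)$ for every fixed $s$.

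For $t>1,s>1$ area is rewarded, and the competition is between the single near-maximal-area configuration and the exponentially many low, typical walks. The maximal area of a length-$2n$ walk is $\Theta(n^2)$, attained by the ``triangle'' whose apex sits at the midpoint with height $\Theta(n)$, whereas constraining the midpoint height to be small strictly lowers the attainable area by $\Theta(n^2)$; since the number of colored walks is only $e^{O(n)}$, the factor $t^{2\mathcal A}$ with area differences of order $n^2$ overwhelms this configurational entropy and pins the measure to walks with $\langle m\rangle=\Theta(n)$. The entropy formula then yields $S_n\ge\log s\,\langle m\rangle=\Omega(n)$, while the trivial bounds $m\le n$ and $H(\{p_m\})\le\log(n+1)$ give the matching $S_n=O(n)$; the hypothesis $s>1$ is essential, since for $s=1$ the color term vanishes and the sharply concentrated height leaves only $H=O(\log n)$. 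I expect this last step---the quantitative large-deviation estimate showing that the area reward beats the entropy of low walks and forces $\langle m\rangle$ to order $n$---to be the main obstacle, as it is precisely where a naive largest-term bound is too weak and one must control the full partition function $\mathcal N^2=\sum_w t^{2\mathcal A(w)}$, most cleanly through the spectral asymptotics of $T$ (equivalently, via the known $q$-deformed Gaussian-binomial formulas for area-weighted path counts).
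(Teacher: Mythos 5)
Your reduction is sound and in fact matches the paper's: the Schmidt vectors are labeled by the midpoint height $m$ together with the coloring of the $m$ unmatched up-steps, the Schmidt eigenvalue depends only on $m$ with multiplicity $s^m$, and your identity $S_n=H(\{p_m\})+\log s\,\langle m\rangle$ is exactly the split the paper works with (there $p_{n,m}=s^mM_{n,m}^2/N_n$). The gap is that both quantitative inputs that actually decide the two phases are left as plans, and they are precisely where the work lies. For $t>1$ you correctly identify that one must show the area reward pins $\langle m\rangle=\Theta(n)$, and you yourself flag this as ``the main obstacle'' without resolving it; a largest-term comparison of $t^{2\mathcal{A}}$ against $e^{O(n)}$ configurations is not by itself a proof, because the relevant object is the normalized ratio $s^mM_{n,m}^2/\sum_{m'}s^{m'}M_{m'}^2$ and one needs a statement, uniform in $n$, about where $M_{n,m}$ peaks. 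The paper supplies this as a lemma: writing $M_{n+1,\cdot}$ as $t^{\mathcal{H}}(s\sqrt{t}\,\mathcal{S}+1+t^{-1/2}\mathcal{S}^\dagger)$ applied to $M_{n,\cdot}$ and commuting the $t^{\mathcal{H}}$ factors through the product, the $n$-fold transfer operator is, up to an error controlled by a convergent sum $\sum_k c_k$, a pure string of raising operators; hence the peak $m^*$ of $M_{n,m}$ lies in $[n-2N_0,n]$ with $N_0$ independent of $n$, after which $\sum_{l\ge0}s^{-l}l<\infty$ (this is where $s>1$ enters) gives $S_n>n\log s-\mathrm{const}$.

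For $t<1$ your transfer-matrix localization claim is plausible but also not carried out, and it hides a real subtlety: $M_{n,\cdot}$ is $T^n$ applied to the initial vector, not the leading eigenvector of $T$, and $T$ is neither symmetric nor normal in the relevant normalization, so ``exponentially localized dominant eigenvector'' does not immediately yield a ratio bound $p_{m+1}/p_m\le r<1$ uniform in $n$. The paper's route is more elementary and bypasses spectral theory: setting $\tilde M_{n,m}=s^{m/2}M_{n,m}$, the recursion gives $\tilde M_{n+1,m}=t^m(\sqrt{st}\,\tilde M_{n,m+1}+\tilde M_{n,m}+\sqrt{s/t}\,\tilde M_{n,m-1})$, each term is bounded by a constant times the largest of the three neighbors, and the monotonicity $\sum_m\tilde M_{n+1,m}^2>\sum_m\tilde M_{n,m}^2$ then yields the pointwise bound $\tilde p_{n,m}<9st^{2m-1}$, from which boundedness of both $H(\{p_m\})$ and $\langle m\rangle$ follows by summing geometric-type series. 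To complete your version you would need either to prove the localization claim for the non-normal $T$ or to adopt a direct bound of this kind; as written, both halves of the theorem rest on asserted but unproven concentration estimates.
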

\noindent The results for the $t=1$ point are described in \cite{bravyi2012criticality,movassagh2015power}.

\subsection{A Frustration Free Deformation}\label{sec:setup}
For simplicity, we start our derivation from the uncolored model. 
In this section we explain how to get a deformation of the spin-$1$ Hamiltonian described in \cite{bravyi2012criticality} while maintaining frustration freeness.

We start by quickly reviewing the construction in \cite{bravyi2012criticality}. The ground state of this a spin chain of length $2n$, can be represented as an equal weight superposition of `Motzkin walks', defined as follows:
\begin{Def} \label{def: Motzkin} A Motzkin walk (or path) on 2n steps is any path from (0, 0) to (0, 2n) with steps (1, 0), (1, 1) and (1, --1) that never pass below the x-axis. \end{Def}

Pictorially, a Motzkin walk corresponds to a mountain range that is located between site $0$ and $2n$. A Motzkin walk naturally encodes a spin-$1$ state $|\sigma_{1},...\sigma_{2n}\rangle$ constructed by assigning for the local spin variables $\sigma_{k}=+1,-1$ or $0$ if the walk goes up, down, or stays flat at site $k$. 
As depicted in Fig.~\ref{MotzkinUncolored}, Motzkin paths can be thought of as grammatically allowed choices for placing left and right parentheses in a sentence, since a right parenthesis is only allowed to be placed if there is an unpaired left parenthesis to the its left. Thus, following the notation of \cite{bravyi2012criticality} we will span the local spin basis using $|l\rangle$,  $|r\rangle$ and $|0\rangle$, corresponding to the $S^{z}=+1,-1$, and $0$  states respectively. 

\begin{figure} \centering \includegraphics[width=0.4\textwidth]{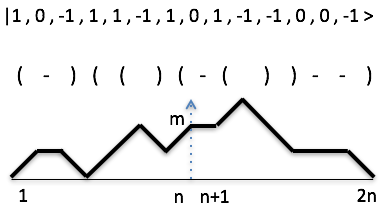} \caption{A spin-$1$ configuration, and its representation as a set of parentheses and a Motzkin path.} \label{MotzkinUncolored} \end{figure}

The superposition of Motzkin walks given by
\begin{eqnarray} |GS\rangle = \frac{1}{\mathcal{N}}\sum_{\substack{w \in \{Motzkin\ walks\}}} |w\rangle,\end{eqnarray} where $\mathcal{N}$ is normalization factor, is the unique ground state of the frustration free Hamiltonian 
\begin{eqnarray} \label{BravyiHamiltonian} 
H = |r\rangle\langle r|_1 + |l\rangle\langle l|_{2n} + \sum_{j=1}^{2n-1}\pi_{j,j+1}, \end{eqnarray}  
where projectors acting on spin $j, j+1$ \begin{equation*} \pi_{j,j+1} = |\phi\rangle\langle\phi|_{j,j+1} + |\psi\rangle\langle\psi|_{j,j+1} + |\theta\rangle\langle\theta|_{j,j+1}, \end{equation*} with \begin{equation*} |\phi\rangle = \frac{1}{\sqrt{2}}(|0l\rangle - |l0\rangle), |\psi\rangle = \frac{1}{\sqrt{2}}(|0r\rangle - |r0\rangle), |\theta\rangle = \frac{1}{\sqrt{2}}(|00\rangle - |lr\rangle). \end{equation*}

Our aim in this section is to deform away from the Hamiltonian (\ref{BravyiHamiltonian}) into a Hamiltonian with a unique ground state that is a weighted superposition of Motzkin paths, that can favor paths with greater height in the middle of the mountains, while preserving the frustration free nature. This is accomplished in the following theorem: \\

\begin{Thm} \label{thm: FF} The Hamiltonian 
\begin{equation} H = |r\rangle\langle r|_1 + |l\rangle\langle l|_{2n} + \sum_{j=1}^{2n-1}\Pi_{j,j+1}, \label{Ham}\end{equation} 
defined on $(\mathbb{C}^3)^{\otimes 2n}$ 
with  \begin{equation*} \Pi_{j,j+1} = |\Phi\rangle\langle\Phi|_{j,j+1} + |\Psi\rangle\langle\Psi|_{j,j+1} + |\Theta\rangle\langle\Theta|_{j,j+1}, \end{equation*} where \begin{align*} |\Phi\rangle_{j,j+1} &= \cos\phi_{j+\frac{1}{2}} |0l\rangle_{j,j+1} - \sin\phi_{j+\frac{1}{2}}|l0\rangle_{j,j+1},\\ |\Psi\rangle_{j,j+1} &= \cos\psi_{j+\frac{1}{2}}|0r\rangle_{j,j+1} - \sin\psi_{j+\frac{1}{2}}|r0\rangle_{j,j+1},\\ |\Theta\rangle_{j,j+1} &= \cos\theta_{j+\frac{1}{2}}|00\rangle_{j,j+1} - \sin\theta_{j+\frac{1}{2}}|lr\rangle_{j,j+1} \end{align*} is frustration free and has a unique ground state with zero energy provided $\psi_{i}, \phi_{i},\theta_{i}\in (0,\pi/2) $ satisfy relations 
\begin{equation} \tan\theta_i\cot\phi_i = \tan\theta_{i+1}\tan\psi_{i+1}, \qquad i=\frac{3}{2}, \frac{5}{2}, \frac{7}{2}, \ldots, 2n-\frac{1}{2}. \label{tune}\end{equation} \end{Thm}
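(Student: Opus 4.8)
The plan is to use that $H$ in (\ref{Ham}) is a sum of positive-semidefinite projectors, so $H\ge 0$ and a zero-energy state is exactly a vector annihilated by every individual term (each $|\Phi\rangle\langle\Phi|$, $|\Psi\rangle\langle\Psi|$, $|\Theta\rangle\langle\Theta|$, and the two boundary terms, since the projectors within a bond are mutually orthogonal). I would prove the statement in two stages: first exhibit one such common null vector, which simultaneously establishes frustration-freeness, and then show the common kernel is one-dimensional. Writing a candidate as $|\xi\rangle=\sum_{\sigma}d_\sigma|\sigma\rangle$ over the spin basis $\sigma\in\{l,r,0\}^{2n}$, the condition $\langle\Phi|_{j,j+1}|\xi\rangle=0$ decouples, for every fixed configuration $\tau$ of the remaining $2n-2$ sites, into the two-term relation $\cos\phi_{j+\frac{1}{2}}\,d_{\tau,(0l)}=\sin\phi_{j+\frac{1}{2}}\,d_{\tau,(l0)}$, and similarly $|\Psi\rangle,|\Theta\rangle$ give $d_{\tau,(r0)}/d_{\tau,(0r)}=\cot\psi_{j+\frac{1}{2}}$ and $d_{\tau,(lr)}/d_{\tau,(00)}=\cot\theta_{j+\frac{1}{2}}$. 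Thus each bulk projector enforces an \emph{environment-independent} amplitude ratio between two configurations related by an elementary local move ($0l\leftrightarrow l0$, $0r\leftrightarrow r0$, $00\leftrightarrow lr$), while the boundary terms simply forbid $\sigma_1=r$ and $\sigma_{2n}=l$.

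For existence, I would observe that these moves act within the set of Motzkin walks and connect any Motzkin walk to any other (every path can be flattened to $00\cdots0$), so the prescribed ratios fix all amplitudes up to one global scale \emph{provided} they are consistent around cycles. Since the ratios are environment-independent, moves on disjoint bonds commute automatically, and it suffices to check the elementary cycle supported on three adjacent sites. Traversing $000\to lr0\to l0r\to 0lr\to 000$ and multiplying the prescribed ratios $\cot\theta_i$, $\tan\psi_{i+1}$, $\tan\phi_i$, $\tan\theta_{i+1}$, the loop closes with unit product exactly when $\tan\theta_i\cot\phi_i=\tan\theta_{i+1}\tan\psi_{i+1}$, i.e. precisely the tuning relation (\ref{tune}). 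One then checks that the boundary conditions hold automatically (Motzkin walks never start with $r$ nor end with $l$) and that each two-term relation is satisfied even when one side leaves the Motzkin class: the two configurations joined by a move share all partial heights except one, so they are simultaneously Motzkin or simultaneously not, and both sides vanish together. The resulting state lies in the common kernel, proving frustration-freeness.

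For uniqueness, the central step, I would show that any zero-energy $|\xi\rangle$ is supported only on Motzkin walks. Because $\phi_i,\psi_i,\theta_i\in(0,\pi/2)$, every ratio above is finite and nonzero, so $d_\sigma=0$ if and only if $d_{\sigma'}=0$ whenever $\sigma,\sigma'$ are joined by a move; the vanishing of one amplitude propagates across an entire move-connected class. The plan is then to prove a combinatorial lemma: every non-Motzkin configuration is move-connected to one killed by a boundary projector. Concretely, a configuration with positive total height (an unmatched $l$) can be rewritten---sliding $l$'s rightward past $0$'s via $0l\leftrightarrow l0$ and deleting matched $lr$ pairs via $lr\leftrightarrow 00$---until an $l$ occupies site $2n$, which $|l\rangle\langle l|_{2n}$ annihilates; symmetrically a configuration dipping below the axis contains a left-unmatched $r$ that can be slid to site $1$ and killed by $|r\rangle\langle r|_1$. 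Hence all non-Motzkin amplitudes vanish, and on the single move-connected Motzkin class the consistent ratios pin $|\xi\rangle$ to a scalar multiple of the state built above, establishing non-degeneracy.

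I expect this combinatorial connectivity/sliding lemma to be the main obstacle: one must verify that the reduction to a boundary-forbidden configuration can always be carried out using only the available moves, never invoking a singular ratio, and that the three-site cycles really generate the full cycle space (a local-confluence check) so that (\ref{tune}) is the \emph{only} constraint. A clean alternative for the final pinning-down step is a Perron--Frobenius argument on the connected, non-negative transfer structure of the Motzkin class, which forces a one-dimensional top eigenspace once support outside the class has been excluded.
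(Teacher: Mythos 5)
Your proposal is correct and follows essentially the same route as the paper: the same elementary moves $0l\leftrightarrow l0$, $0r\leftrightarrow r0$, $00\leftrightarrow lr$ with environment-independent amplitude ratios, the same connectivity-by-flattening argument for Motzkin walks, the same sliding of unmatched $l$'s and $r$'s to the boundary to kill non-Motzkin amplitudes, and the same three-site loop $000\to lr0\to l0r\to 0lr\to 000$ whose closure is exactly the tuning relation \eqref{tune}. The global-consistency step you flag as the main obstacle is precisely what the paper supplies via its ``hill-location'' argument (the weight of a plateau depends only on where its generating hill sits, and \eqref{tune} makes that choice immaterial), so your outline matches the published proof in both structure and substance.
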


We first prove the uniqueness of the ground state (GS) first assuming the Hamiltonian is frustration free, and then show that the Hamiltonian is indeed frustration free under condition \eqref{tune}.

{\it Remark:} When some of the angles equal an integer multiple of $\pi/2$, the Hamiltonian is still frustration free, but may have a degenerate ground state.

\begin{proof} (Uniqueness of GS) We look for a frustration-free ground state that will be annihilated by each of the terms in the Hamiltonian (\ref{Ham}).  We define the following $R, L, F$ moves and their inverses: \begin{equation} |l0\rangle \xrightleftharpoons[R^{-1}]{R} \tan\phi|0l\rangle,\quad |0r\rangle \xrightleftharpoons[L^{-1}]{L} \cot\psi|r0\rangle, \quad |lr\rangle \xrightleftharpoons[F^{-1}]{F} \tan\theta|00\rangle. \label{moves} \end{equation}
We first note that if a ground state wave function contains a particular spin configuration (a ``walk''), then the ground state wave function must contain as well a superposition of all states which can be obtained from it by the set of moves (\ref{moves}).

Indeed, at each neighboring two sites, the local spin state can be one of the nine possible configurations in $\{ |ll\rangle, |rr\rangle, |rl\rangle, |0l\rangle, |l0\rangle, |0r\rangle, |r0\rangle, |00\rangle, |lr\rangle\}$, the first 3 of which are annihilated by the projectors $\Pi$ individually. The rest must form pairs $\sin\phi |0l\rangle + \cos\phi|l0\rangle, \sin\psi|0r\rangle + \cos\psi|r0\rangle$, and $\sin\theta|00\rangle + \cos\theta|lr\rangle$ to be annihilated by $|\Phi\rangle\langle\Phi|, |\Psi\rangle\langle\Psi|$, and $|\Theta\rangle\langle\Theta|$ respectively. Each of these  superpositions corresponds to mixing between states related by the moves $R, L$, and $F$.  

The processes of generating {additional walks starting from a given one is `mixing'} in that it can keep going on and on until all Motzkin walks are included in the superposition. To see this we construct the following procedures of relating Motzkin walks to the `flat' mountain, i.e. the string of spins $000\ldots 0$:
If the highest peak of the current mountain is of the type $l0\cdots r$ (i.e. a plateau), then keep applying $L$ and/or $R$ moves until it becomes $lr$ (i.e. a hill), otherwise apply the $F$ operation to the hill. Note that there are multiple choices of numbers and orders of $L$ and $R$ moves applied to make a plateau a hill. This can be done iteratively until the mountain is completely flat. (See Fig.~\ref{uni}.) Given any Motzkin walk, we can represent it by a sequence of consecutive moves applied to it to get to the flat mountain, e.g. $|M_1\rangle = M_1|000\ldots 0\rangle = (F_{i_n}\cdots R_{i_3}L_{i_2}L_{i_1})^{-1}|000\ldots 0\rangle$. So any two Motzkin walks are related by $|M_2\rangle = M_2 M_1^{-1}|M_1\rangle$. 

Similarly, it is easy to see that any walk which crosses below zero, or that does not return to zero at the end of the chain, can be transformed by the $R,L,F$ moves and their inverses into a walk that violates the boundary projectors.
Therefore if a zero energy ground state does exist, then it will be the unique superposition of all Motzkin walks with weights determined by the tuned projectors. \end{proof}

\begin{figure} \centering \includegraphics[width=0.5\textwidth]{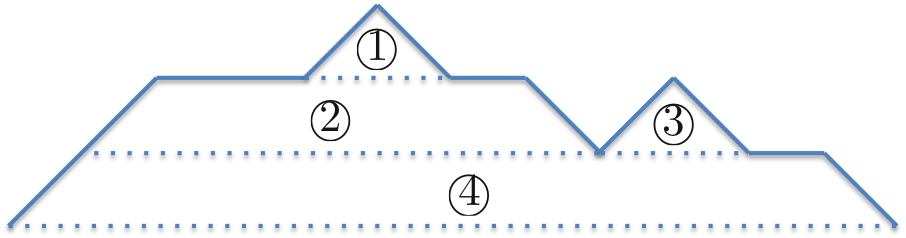} \caption{Iterative procedure to flatten a mountain to the ground, where steps 2 and 3 can be interchanged.} \label{uni} \end{figure}

It remains to be examined whether the aforementioned freedom in choosing the sequences of moves may result in ambiguities in the relative weights between Motzkin walks. It turns out that the tuning conditions \eqref{tune} suffice to guarantee that a superposition of Motzkin walks can be written without ambiguities in the relative amplitudes. It can be seen from an observation of the local moves involving three adjacent sites illustrated in Fig.~\ref{commu}. The two ways to get $|000\rangle$ from $|l0r\rangle$ will give the same relative weight if and only if the mixing angles at two neighboring junctions satisfy the relation \eqref{tune}. The global version of this statement holds as well:

\begin{figure} \centering \includegraphics[width=0.5\textwidth]{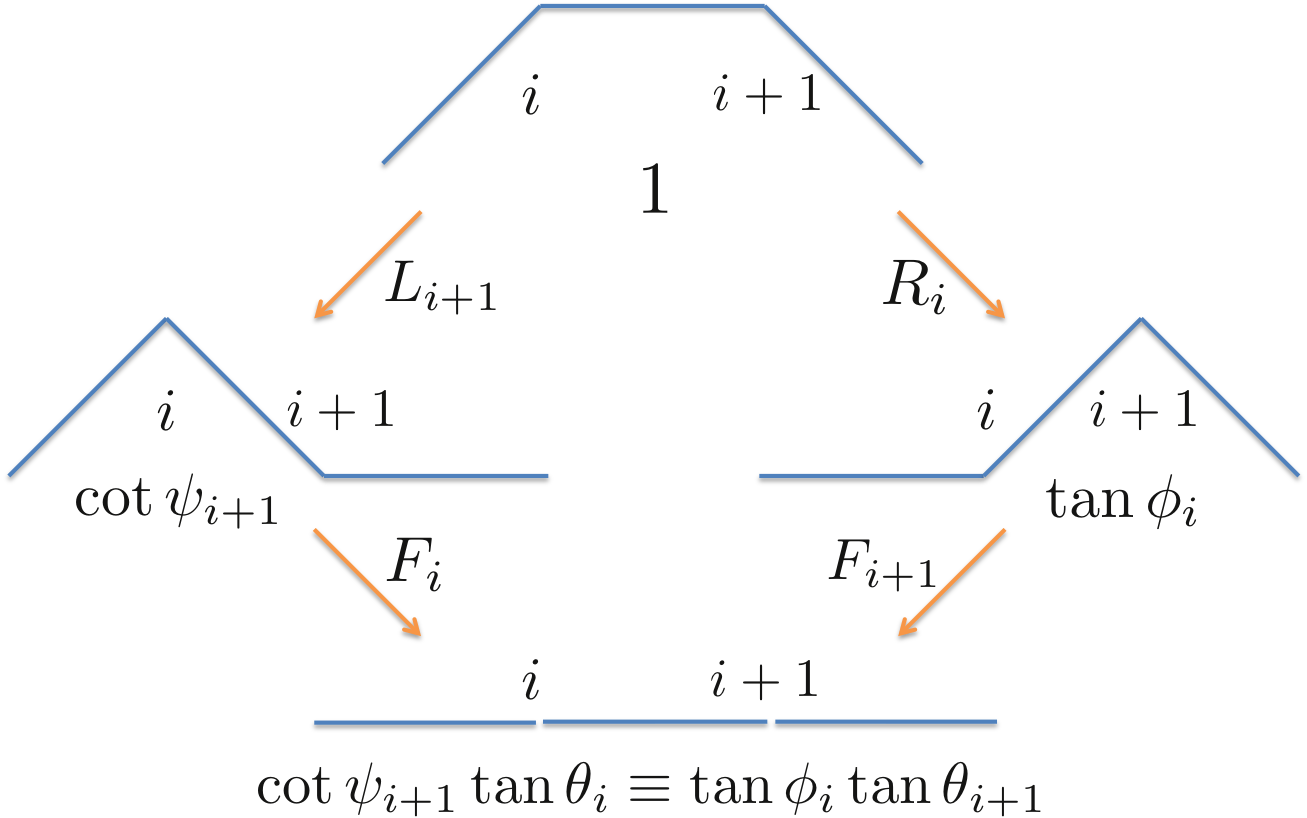} \caption{Two different sequences of moves to relate local state $|000\rangle$ to $|l0r\rangle$, and the relative weights of each state invovled.} \label{commu} \end{figure}

\begin{figure} \centering \includegraphics[width=0.5\textwidth]{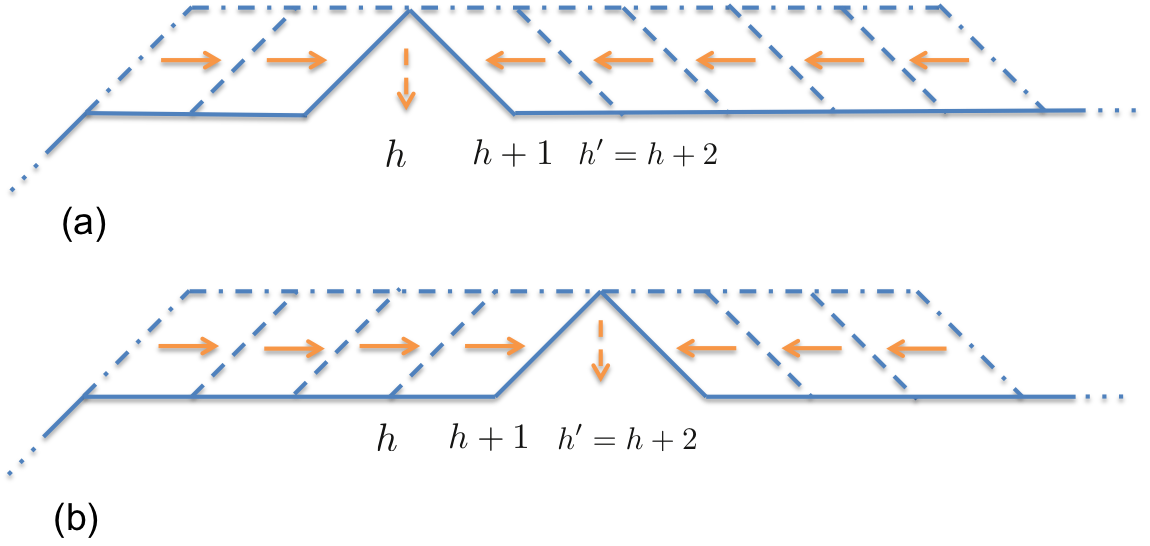} \caption{Different sequences of $L, R, F$ moves to get a hill (solid) from a starting plateau (dash dot) with intermediate plateaus after each move (dash). (a) Two different sequences of moves $R_{h-1}L_{h+1}L_{h+2}L_{h+3}R_{h-2}L_{h+4}L_{h+5}$ and $L_{h+1}L_{h+2}R_{h-1}L_{h+3}L_{h+4}L_{h+5}R_{h-2}$ with the same location of hill always give the same relative weight. (b) A sequence of moves with hill location different from those in (a) could generically give a different relative weight, except when relation~\eqref{tune} is satisfied.} \label{loc} \end{figure}

\begin{proof} (Frustration Freeness) A plateau of width $d$, (that is, the number of $0$ spins), is generated by one hill (or $F^{-1}$ move) and $d$ $R^{-1}$ and $L^{-1}$ moves. (See Fig.~\ref{loc}.) Once the location of the hill is chosen, $R^{-1}$ ($L^{-1}$) only acts on its left (resp. right), and whether acting an $R^{-1}$ on the left first or an $L^{-1}$ on the right first doesn't affect the weight. So the weights are completely determined by the location of the hills that plateaus originate from at each level. The weights of the same plateau generated by hills at location $h$ and $h'$ are related by \begin{equation} m(h) = \prod_{i=h}^{h'-1}\frac{\tan\theta_i \cot\phi_i}{\tan\theta_{i+1} \tan\psi_{i+1}} m(h') = m(h').\end{equation} Therefore the weight of each mountain is an invariant of the sequence of moves chosen to construct it from the flat mountain.\\  Furthermore, if two mountains are related directly to each other without passing through the flat mountain, by a sequence of $N$ moves, then each move in the sequence can be viewed as either a `piling' move away from the flat mountain or a `flattening' move towards it. So the intermediate mountains generated in this sequence each have definite weight $m_1, m_2, m_3, \ldots$, and the relative weight between these two mountains \begin{equation} \frac{m_A}{m_B} = \frac{m_A}{m_1}\frac{m_1}{m_2}\frac{m_2}{m_3}\cdots \frac{m_N}{m_B}\end{equation} is an invariant. It follows that the relative weights between any two Motzkin walks are well-defined and conditions~\eqref{tune} is sufficient for Hamiltonian~\eqref{Ham} to be frustration free.\end{proof}

\subsection{The Colorful Model}\label{sec:model}
In this section we establish our main model as summarized in theorem \ref{Main model}.

Following \cite{movassagh2015power} we add color to the Motzkin paths in the ground state superposition. In this case one can think of the states as the admissible ways of placing parentheses (labeled by color) of several types into a sentence. In the $s$-colored model, the local spin space is $2s+1$ dimensional and spanned by the basis states $|0\rangle,|l^{1}\rangle,..|l^{s}\rangle,|r^{1}\rangle,..|r^{s}\rangle$. For illustration, Fig.~\ref{MotzkinPathColored} depicts a coloring choice for the uncolored Motzkin path in Fig. \ref{MotzkinUncolored} and the associated spin state.
\begin{figure} \centering \includegraphics[width=0.8\textwidth]{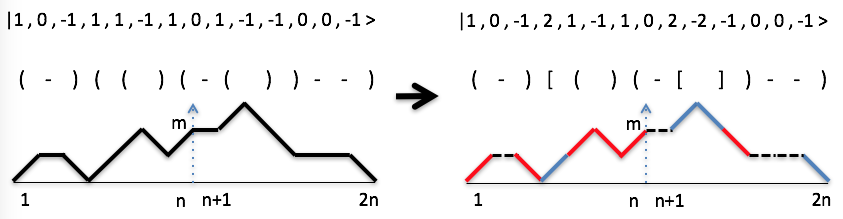} \caption{Coloring a Motzkin path leads to a higher spin configuration. Representations of a spin-2 state as a a set of parentheses and as Motzkin path, where $ 1,-1\leftrightarrow (,)  \leftrightarrow$ red and $ 2,-2\leftrightarrow [,]  \leftrightarrow$ blue.} \label{MotzkinPathColored} \end{figure}

Below we incorporate the colors in our model, and pick a particular, translationally invariant choice for the angles in Eq.~\eqref{tune}, $\cot\phi_i = \tan\psi_i = t$. For simplicity, we further let $\cot\theta_i = t$. Now, all three moves $R^{-1}, L^{-1}$, and $F^{-1}$ in (\ref{moves}) change the weight of a Motzkin path by a factor of $t$ and increase the area below the mountain by exactly one unit. So the weight of each mountain is simply determined by the area below it. Thus, the weight of each mountain compared to the flat Motzkin path is given by $ t^{\mathcal{A}(w)}$. The result is a ground state where the Motzkin paths are exponentially weighted according to the area under paths, rather than a uniform superposition.
The result is summarized in theorem \ref{Main model}.

\section{Entanglement entropy}\label{sec:scaling}
In the Schmidt decomposition of the ground state~\eqref{cgs}, the coloring of the unpaired spins in the second half of the system is completely determined by that in the first half. We can write the decomposition as:
\begin{equation} |GS\rangle = \sum_{m=0}^{n} \sqrt{p_{n,m}} \sum_{x\in \{l^1,l^2,\ldots,l^s\}^m} |\hat{C}_0,m,x\rangle_{1,\ldots,n} \otimes |\hat{C}_m,0,\bar{x}\rangle_{n+1,\ldots,2n}, \end{equation} where $|\hat{C}_p,q,x\rangle_{1,\ldots,n} $  is a weighted superposition states in $\{0, l^1, \ldots, l^s, r^1,\ldots, r^s\}^n$ with $p$ excess right, $q$ excess left parentheses and a particular coloring $x$ of the unmatched parentheses, such that $\langle GS|(|\hat{C}_0,m,x\rangle_{1,\ldots,n} \otimes |\hat{C}_m,0,\bar{x}\rangle_{n+1,\ldots,2n}) \neq 0$, and $\bar{x}$ is the coloring in the second half of the chain that matches $x$. The decomposition gives the Schmidt number \begin{equation} p_{n,m}(s,t) = \frac{M_{n,m}^2(s,t)}{N_n(s,t)}, \end{equation} where \begin{align}M_{n,m}(s,t) &\equiv \sum_{i=0}^{\frac{n-m}{2}} s^i\sum_{\substack{w\in \{1st\ half\ of\ Motzkin\ walks\ with \\  i\ paired\ spins \ stopped\ at\ (n,m)\}}}t^{\mathcal{A}(w)},\\ N_n(s,t) &\equiv \sum_{m=0}^{n}s^mM_{n,m}^2(s,t). \end{align} And the entanglement entropy of the half chain in the ground state is given by \begin{equation} S_n(s,t) = -\sum_{m=0}^ns^mp_{n,m}(s,t)\log p_{n,m}(s,t).\end{equation}

First we notice that the Hamiltonian \eqref{cHam} and the ground state \eqref{cgs} reproduce those by Movassagh and Shor when $t=1$. So the entanglement entropy $S_n(s,1)$ scales as $\sqrt{n}$. To study the asymptotic scaling of $S_n(s,t)$ with the system size when $t \neq 1$, we need the following lemma about the behavior of $M_{n,m}$ as a function of $m$.

\begin{Lem} $M_{n,m}$ satisfies the following recurrence relations \begin{align} M_{k+1,k+1} &= t^{k+\frac{1}{2}}M_{k,k},\\ M_{k+1,k} &= t^kM_{k,k} + t^{k-\frac{1}{2}}M_{k,k-1},\\ M_{k+1,m} &= st^{m+\frac{1}{2}}M_{k, m+1} + t^mM_{k,m} + t^{m-\frac{1}{2}}M_{k,m-1}, \quad 1<m<k-1,\\ M_{k+1,0} &= st^{\frac{1}{2}}M_{k,1} + M_{k,0}. \label{recurrel}\end{align} \end{Lem}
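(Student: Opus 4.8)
The plan is to prove all four recurrences simultaneously by a single ``last-step'' decomposition of the walks counted by $M_{k+1,m}$. First I would make explicit the combinatorial content of the definition: $M_{n,m}(s,t)$ is the weighted sum over all first-half Motzkin paths running from height $0$ at site $0$ to height $m$ at site $n$ while staying nonnegative, each path $w$ carrying the weight $s^{i}t^{\mathcal{A}(w)}$, where $i$ counts the up-steps whose matching down-step also lies in the first half (the ``paired'' spins) and $\mathcal{A}(w)$ is the enclosed area. Since these two weight factors respond independently to appending a step, I can bookkeep them separately.

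Next I would record how each weight responds to a single appended step. Writing the area as a sum of trapezoidal contributions $\tfrac{h_{j-1}+h_j}{2}$ over the steps, a step that \emph{ends} at height $m$ adds area $m$ if it is flat (coming from height $m$), $m-\tfrac12$ if it is an up-step (from height $m-1$), and $m+\tfrac12$ if it is a down-step (from height $m+1$); these are exactly the integer and half-integer exponents of $t$ that appear in the recurrences. For the color factor, a flat or up step leaves the set of first-half matched pairs unchanged, whereas a down step appended to a path ending at height $m+1$ pairs with the topmost previously-unmatched left parenthesis, converting it into a fully-enclosed pair and raising $i$ by one; summing over the $s$ possible shared colors of this new pair multiplies the weighted count by $s$.

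Then I would partition the walks contributing to $M_{k+1,m}$ by the type of their last step. A flat last step is in bijection with length-$k$ walks ending at height $m$, contributing $t^{m}M_{k,m}$; an up last step corresponds to walks ending at height $m-1$, contributing $t^{m-\frac12}M_{k,m-1}$; and a down last step corresponds to walks ending at height $m+1$, contributing the extra color factor to give $s\,t^{m+\frac12}M_{k,m+1}$. Summing these three families yields the generic relation. The remaining three identities are boundary specializations of the same decomposition: at $m=k+1$ only the all-up path exists, so only the up term survives; at $m=k$ no length-$k$ walk can reach height $k+1$, so the down term is absent; and at $m=0$ the path cannot descend below zero, so the up term is absent while the down term descends from height $1$ with area increment $\tfrac12$.

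I expect the only genuinely delicate point to be the color bookkeeping for the down step, namely the claim that appending a down-step to a height-$(m+1)$ path creates \emph{exactly one} new first-half matched pair and hence a single factor of $s$. I would justify this through the parenthesis-matching picture: the appended right parenthesis matches the most recent unmatched left parenthesis, which necessarily lies in the first half, so the count $i$ of enclosed pairs rises by exactly one and the sum over the shared color of that pair produces the factor $s$ uniformly across all such walks. Once this is settled, both the area and the color increments factor cleanly through the last-step map, and the four recurrences follow.
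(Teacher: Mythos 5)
Your proposal is correct and follows essentially the same route as the paper, which proves the lemma by the same last-step decomposition (classifying walks reaching $(k+1,m)$ by whether the final step is down, flat, or up, with the trapezoidal area increments $m+\tfrac12$, $m$, $m-\tfrac12$) and simply points to a figure rather than writing it out. Your explicit justification of the single factor of $s$ for an appended down-step via the parenthesis-matching argument fills in a detail the paper leaves implicit, but it is not a different approach.
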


\begin{proof} This can be easily seen from the possible ways to arrive at a certain destination and the increment of area below each path as illustrated in Fig.~\ref{fig: recur}. \end{proof}

\begin{figure} \centering \includegraphics[width=0.5\textwidth]{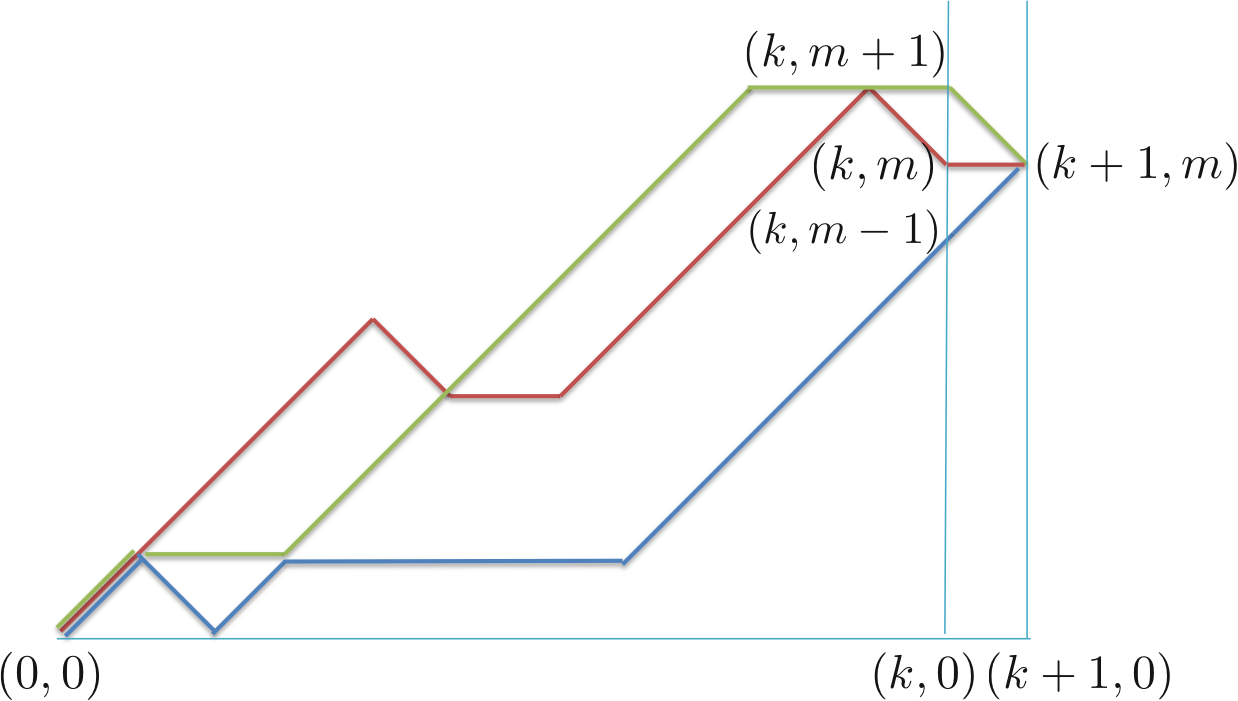} \caption{Three representative paths of different heights at position $k$ to generate a path that ends with height $m$ at position $k+1$. The increments in the area below the paths are the areas sandwiched between two vertical lines, which are $m+\frac{1}{2}, m, m-\frac{1}{2}$ respectively for red, green and blue paths.} \label{fig: recur} \end{figure}

Starting from the seed values $M_{0,0} = 1$, by using the recurrence relations repetitively, one can calculate the values of $M_{n,m}$ and Schmidt numbers for any $m$ and calculate the entanglement entropy. In the next sections we show how to get a lower bound on the entanglement entropy when $t>1$ and an upper bound when $t<1$.

\subsection{$t>1,s>1:$ Volume scaling of entropy}
In this section we prove the linear scaling of entropy for $t,s>1$ as summarized by:
\begin{Thm}  In the state \eqref{cgs}, when $t > 1$, the entanglement entropy of sites $1...n$, is bounded below as  $S_{n}>n \log s +const.$ for all $n$, where $const.$ is an $n$ independent constant. \label{Thm:linearEntropy}\end{Thm}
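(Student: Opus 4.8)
The plan is to reduce the entropy bound to a statement about the mean height of the Schmidt distribution and then control that mean via the recurrences of the Lemma. Writing $P_m := s^m p_{n,m} = s^m M_{n,m}^2/N_n$ for the total Schmidt weight carried by height $m$ (so $\sum_{m=0}^n P_m = 1$), the entropy splits as
\begin{equation*}
S_n = -\sum_{m=0}^n s^m p_{n,m}\log p_{n,m} = -\sum_{m=0}^n P_m\log P_m + \log s\sum_{m=0}^n m\,P_m \ge \log s\,\langle m\rangle ,
\end{equation*}
where $\langle m\rangle := \sum_m m\,P_m$ and I dropped the nonnegative Shannon term $-\sum_m P_m\log P_m$. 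Since $\log s>0$ (recall $s>1$), it suffices to prove $\langle m\rangle \ge n - C$ for an $n$-independent constant $C$, i.e. $\sum_{j\ge 0} j\,P_{n-j}\le C$.

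First I would normalize against the top of the distribution. The first recurrence gives $M_{n,n}=t^{n^2/2}$ (the unique walk reaching height $n$ is the all-up path), and since $N_n\ge s^n M_{n,n}^2$ one has $P_{n-j}\le s^{-j}\,(M_{n,n-j}/M_{n,n})^2$. The problem thus reduces to an upper bound on the ratio $M_{n,n-j}/M_{n,n}$ that decays fast enough in $j$ uniformly in $n$.

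Next I would estimate $M_{n,n-j}$ directly from its definition. A first-half walk ending at $(n,n-j)$ with $d$ down-steps (hence $d$ matched pairs, carrying the color factor $s^d$) uses $u=n-j+d$ up-steps and $f=j-2d$ flat steps, with $0\le d\le\lfloor j/2\rfloor$. Placing all ups first, the flats at maximal height, and the downs last yields the maximal area $\mathcal A_{\max}(d)=\tfrac{n^2}{2}-\tfrac12\big((j-d)^2+d^2\big)$. The key point is that the full area-weighted sum over such walks is \emph{not} merely this top term times the (polynomially large) number of walks; rather, sliding flats and down-steps toward lower heights lowers the area by integer amounts, so the sum organizes into an area-deficit generating function in $q:=1/t<1$. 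I expect this to be bounded by a Gaussian-binomial / Euler-product factor $K:=\prod_{i\ge1}(1-q^i)^{-2}<\infty$ that is uniform in $n$, giving $M_{n,n-j}\le K\,t^{n^2/2}\sum_{d}s^d\,t^{-((j-d)^2+d^2)/2}$ and hence $M_{n,n-j}/M_{n,n}\le K\sum_d s^d\,t^{-((j-d)^2+d^2)/2}$.

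Combining the pieces, $\sum_{j\ge0} j\,P_{n-j}\le K^2\sum_{j\ge0} j\,s^{-j}\big(\sum_d s^d\,t^{-((j-d)^2+d^2)/2}\big)^2$. The inner $d$-sum is maximized near $d=j/2$, where $s^d\,t^{-((j-d)^2+d^2)/2}\sim s^{j/2}t^{-j^2/4}$, so the summand behaves like $j\,t^{-j^2/2}$ and the series converges to a finite $n$-independent constant $C$. This gives $\langle m\rangle\ge n-C$ and thus $S_n\ge n\log s - C\log s$, as claimed. The main obstacle is the uniform-in-$n$ control of the area-deficit sum in the third step: one must show that the geometric ($q<1$) suppression of area deficits genuinely beats the polynomial-in-$n$ proliferation of walks, rather than relying on the crude ``maximal area times walk count'' bound, which grows faster than any polynomial once summed over $j$ and is therefore too weak.
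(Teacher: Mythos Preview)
Your reduction to $\langle m\rangle \ge n - C$ is exactly the paper's first step (equation \eqref{lower entropy bound}). From there the two arguments diverge.

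The paper does not estimate the ratios $M_{n,n-j}/M_{n,n}$ at all. Instead it encodes $M_{n,m}$ as coefficients of a vector $|\mathcal{M}_n\rangle$ and rewrites the recursion as an ordered operator product $|\mathcal{M}_n\rangle = \vec{\mathcal{K}}\prod_{k=1}^n (st^{-(k-1/2)}\mathcal{S} + 1 + t^{k-1/2}\mathcal{S}^\dagger)|0\rangle$. For $t>1$ the late factors are dominated by $\mathcal{S}^\dagger$, and a short operator-norm estimate (Lemma~\ref{Lemma:mstar}) shows that the location $m^*$ of $\sup_m M_{n,m}$ lies in $[n-2N_0,n]$ for an $n$-independent $N_0$. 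The remainder is then bounded crudely by $\sum_l l\,P_{n-l}\le \sum_l l\,s^{n-l}M_{m^*}^2/(s^{m^*}M_{m^*}^2)\le s^{2N_0}\sum_{l\ge 0} l\,s^{-l}<\infty$, using only $M_{n,m}\le M_{n,m^*}$ and $N_n\ge s^{m^*}M_{m^*}^2$.

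Your route---bounding $M_{n,n-j}/M_{n,n}$ combinatorially via an area-deficit generating function in $q=1/t$---is a genuinely different and more explicit approach. If the uniform Euler-product bound you conjecture holds, you obtain Gaussian-type decay $P_{n-j}\lesssim t^{-j^2/2}$, strictly sharper than the paper's geometric decay in $s^{-j}$. The cost is exactly the gap you already flag: the uniform-in-$n$ bound on the $q$-deficit sum over walks with prescribed $(u,f,d)$ and the nonnegativity constraint is asserted, not proved, and carrying it out cleanly (e.g.\ via $q$-multinomials) is real work. The paper's operator estimate sidesteps all path-by-path combinatorics and is complete as written; your approach, once that step is filled in, would give more information about the Schmidt distribution.
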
 

The presence of large entropy is a consequence of contributions from the possible colorings of `high' Motzkin paths, those with height at the middle scaling as $O(n)$. Any coloring of the ascending part of the path on the left half chain will have high correlations with the coloring on the descending part of the path on the right, giving us $s^{n}$ distinct left-right color-correlated states in the superposition.

To get a better handle of the type of distribution the recursion relations \eqref{recurrel} lead to, we find it convenient to view the recursion evolution as a process of increasing/decreasing $m$ while $n$ is viewed as discrete `time'. We will show below that for large enough $n$, the distribution associated with the $M_{n,m}$ essentially propagates ballistically (as function of $n$), with very little spread. This property will establish that the typical hight at the middle of a $2n$ chain scales linearly with $n$.

Before exhibiting the proof, we need to develop a few preliminary steps. 
We encode the distributions $M_{n,m}$ as coefficients of wavefunctions defined on the set $|m\rangle, m = 0,1,2,\ldots$ as \begin{equation} |\mathcal{M}_n\rangle = \sum_{m=0}^\infty M_{n,m} |m\rangle, \qquad M_{n,m} = 0\ \text{if}\ m>n.\end{equation} 
We define the following `shift' and `height' operators, which we will use in describing  the `evolution' of the distribution $M_{n,m}$ as function of `time' $n$.
\begin{align} \mathcal{S} |m\rangle &= |m-1\rangle, \qquad |-1\rangle = 0; \\ \mathcal{H} |m\rangle &= m |m\rangle. \end{align} Explicitly, $t^{\mathcal{H}}, \mathcal{S}, \mathcal{S}^\dagger$ 
act on $|\mathcal{M}_n\rangle$ as follows. \begin{align} t^{\mathcal{H}} |\mathcal{M}_n\rangle &= \sum_{m=0}^\infty M_{n,m} |m\rangle, \\ \mathcal{S} |\mathcal{M}_n\rangle &= \sum_{m=1}^\infty M_{n,m} |m-1\rangle = \sum_{m=0}^\infty M_{n,m+1} |m\rangle,\\ \mathcal{S}^\dagger |\mathcal{M}_n\rangle &= \sum_{m=0}^\infty M_{n,m-1} |m\rangle. \end{align}
As remarked above, we aim to show that for large enough $n$, $|\mathcal{M}_{n+1}\rangle \propto \mathcal{S}^\dagger |\mathcal{M}_n\rangle$, describing essentially ballistic propagation with no spread. 
For the recurrence relation \eqref{recurrel} to be satisfied, we require 
\begin{align*}& M_{n+1,m} = \langle m | \mathcal{M}_{n+1}\rangle = s t^{m + \frac{1}{2}} \langle m+1|\mathcal{M}_n\rangle + t^m \langle m |\mathcal{M}_n\rangle + t^{m - \frac{1}{2}} \langle m-1|\mathcal{M}_n\rangle \\ &= \langle m | s t^{\mathcal{H} + \frac{1}{2}} \mathcal{S} + t^{\mathcal{H}} + t^{\mathcal{H} - \frac{1}{2}} \mathcal{S}^\dagger |\mathcal{M}_n\rangle. \end{align*} Therefore, \begin{equation} |\mathcal{M}_{n+1}\rangle = t^{\mathcal{H}} (s\sqrt{t}\mathcal{S} + 1 + \frac{1}{\sqrt{t}}\mathcal{S}^\dagger)|\mathcal{M}_n\rangle . \end{equation} Using the relations, 
\begin{align} t^{\mathcal{H}} \mathcal{S} &= t^{-1} \mathcal{S} t^{\mathcal{H}},\\ t^{\mathcal{H}} \mathcal{S}^\dagger &= t \mathcal{S}^\dagger t^{\mathcal{H}}, \end{align} we have \begin{equation} t^{k\mathcal{H}} (s\sqrt{t}\mathcal{S} + 1 + \frac{1}{\sqrt{t}}\mathcal{S}^\dagger) = (st^{-(k - \frac{1}{2})} \mathcal{S} + 1 + t^{k - \frac{1}{2}} \mathcal{S}^\dagger) t^{k\mathcal{H}}, \end{equation} and \begin{align} |\mathcal{M}_n\rangle &= [t^{\mathcal{H}} (s\sqrt{t}\mathcal{S} + 1 + \frac{1}{\sqrt{t}}\mathcal{S}^\dagger)]^n |\mathcal{M}_0\rangle \\ &= \vec{\mathcal{K}}\prod_{k=1}^n  (st^{-(k - \frac{1}{2})} \mathcal{S} + 1 + t^{k - \frac{1}{2}} \mathcal{S}^\dagger) |0\rangle, \label{korder}\end{align} where $\vec{\mathcal{K}}$ denotes ordering the multiplications in the product such that factors with greater $k$ value is on the right. It is now evident that the factors in the product above are dominated by the $\mathcal{S}^\dagger$ term for large $k$, giving us ``ballistic'' evolution with $n$. To be more precise:

\begin{Lem} Let $m^*$ be such that $\sup_{m} M_{n,m} = M_{n,m^*}$, then $\exists N_{0} < n$, such that when $t > 1$, $m^* \in [n - 2N_{0}, n]$. \label{Lemma:mstar}\end{Lem}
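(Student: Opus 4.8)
The plan is to show that the maximizer $m^*$ of the sequence $M_{n,m}$ is pushed to the extreme right end of the allowed range $[0,n]$ when $t>1$. The intuition from the product formula \eqref{korder} is that each factor $st^{-(k-\frac{1}{2})}\mathcal{S} + 1 + t^{k-\frac{1}{2}}\mathcal{S}^\dagger$ becomes dominated by the raising term $t^{k-\frac{1}{2}}\mathcal{S}^\dagger$ once $k$ is large, since for $t>1$ the coefficient $t^{k-\frac{1}{2}}$ grows while the lowering coefficient $st^{-(k-\frac{1}{2})}$ decays. So I expect that applying $t$ grows once $k>N_0$ for some threshold $N_0$ depending only on $s$ and $t$, the evolution is effectively pure raising, which shifts the support rigidly upward and forces the peak to track the upper boundary $m=n$.

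First I would argue directly from the recurrence \eqref{recurrel} rather than from the operator formalism, since the recurrence gives cleaner control over individual entries. Define the threshold $N_0$ to be the smallest integer such that $t^{N_0-\frac{1}{2}} > s\, t^{-(N_0-\frac{1}{2})} + 1$, equivalently such that for all $k \ge n-N_0$ the raising coefficient dominates the sum of the other two; such an $N_0$ exists and is $n$-independent because $t>1$ makes $t^{k-\frac{1}{2}}\to\infty$ while $st^{-(k-\frac{1}{2})}+1$ stays bounded. The key claim to establish is monotonicity at the top: I would show that for $m$ in the upper band $[n-2N_0, n]$ the comparison $M_{n,m+1}$ versus $M_{n,m}$ can be controlled, and in particular that $M_{n,m}$ cannot achieve its supremum below $n-2N_0$.

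The cleanest route is an inductive comparison. I would track the ratio $M_{n,n-j}/M_{n,n}$ for small fixed $j\ge 0$ and show by induction on $n$ that these ratios stay bounded, so that no entry far below the diagonal can overtake the near-diagonal entries. Concretely, from the top recurrences $M_{k+1,k+1} = t^{k+\frac12}M_{k,k}$ and $M_{k+1,k} = t^k M_{k,k} + t^{k-\frac12}M_{k,k-1}$, the near-diagonal entries grow by factors of order $t^{k}$ at each step, whereas the general recurrence shows that entries at a fixed depth $m$ grow only by factors of order $t^m$ per step (up to the $\mathcal{S}$-mixing that pulls weight down from above). Because the diagonal growth rate $t^{k}$ strictly exceeds the fixed-height growth rate $t^m$ for $m < k$, the weight concentrates near the diagonal. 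Comparing these growth rates quantitatively and summing the geometric corrections should pin the peak into a window of width $2N_0$ below the diagonal, i.e. $m^*\in[n-2N_0,n]$.

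The main obstacle is handling the mixing term $s t^{m+\frac12}\mathcal{S}$ that feeds weight downward from height $m+1$ into height $m$: this is exactly the term that could, a priori, build up a competing peak at low $m$, and it carries the factor $s$ which is large. I would need to show that although this downward feeding is present, its accumulated contribution to any fixed low height stays exponentially subdominant to the diagonal weight because every unit of weight deposited at low height was itself suppressed relative to the weight that stayed high. Making this bookkeeping rigorous — controlling the sum over all descending paths against the single dominant near-diagonal trajectory — is the crux; once the growth-rate comparison is established uniformly, the location of $m^*$ follows, and the existence of the $n$-independent $N_0$ comes for free from $t>1$.
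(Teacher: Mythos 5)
Your intuition is the same as the paper's --- for $t>1$ the raising term $t^{k-\frac12}\mathcal{S}^\dagger$ dominates each factor of \eqref{korder} once $k$ exceeds an $n$-independent threshold, so the evolution is essentially ballistic --- but the proposal stops exactly where the proof has to start. You write that ``making this bookkeeping rigorous \ldots is the crux,'' and indeed the accumulated downward-leaked weight is never bounded. Worse, the heuristic you lean on (entries at fixed height $m$ grow like $t^m$ per step versus $t^k$ on the diagonal) is not correct as stated: the recurrence feeds $M_{k+1,m}$ from above via $s\,t^{m+\frac12}M_{k,m+1}$, and near the diagonal $M_{k,m+1}$ is itself enormous, so the entry at height $m$ does \emph{not} grow at rate $t^m$. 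Likewise, your threshold (pointwise dominance $t^{N_0-\frac12}>s\,t^{-(N_0-\frac12)}+1$ at a single step) is not shown to control the sum over all descending trajectories; what is actually needed is smallness of a \emph{tail sum} of the perturbation norms, not of a single term.

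The paper closes the gap with an operator-norm argument you could adapt. Normalize out the pure-raising growth by $t^{-\sum_{k}(k-\frac12)}$, so each factor becomes $\mathcal{S}^\dagger + E_k$ with
$\|E_k\|\le c_k \equiv t^{-(k-\frac12)}+s\,t^{-2(k-\frac12)}$.
Since $t>1$, $\sum_k c_k$ converges geometrically, so the total deviation of the truncated product $\prod_{k>N_0}(\mathcal{S}^\dagger+E_k)|0\rangle$ from $|n-N_0\rangle$ is at most $\prod_{k>N_0}(1+c_k)-1\le e^{\sum_{k>N_0}c_k}-1$, which can be made less than $1$ (even after applying the remaining $N_0$ factors, which only widen the support to $[n-2N_0,n]$) by choosing $N_0$ so that the tail sum is small --- this is where the explicit, $n$-independent $N_0$ comes from. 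The conclusion then follows because the normalized $M_{n,n}$ equals exactly $1$ (only pure raising reaches $m=n$), while every normalized entry with $m<n-2N_0$ is bounded by the total error $<1$ and hence cannot be the maximum. This last comparison --- a uniform $\ell^2$ bound on \emph{all} low-height entries against an exactly computable diagonal entry --- is the piece your ratio-tracking induction would still need to produce.
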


\begin{proof} Let 
\begin{equation} |\mathcal{M'}_n\rangle =\vec{\mathcal{K}}\prod_{k={N_{0}}+1}^{n} (st^{-(k - \frac{1}{2})} \mathcal{S} + 1 + t^{k - \frac{1}{2}} \mathcal{S}^\dagger) |0\rangle. \end{equation}
%then \begin{align*} M'_{n, n-N} &\equiv \langle n - N|\mathcal{M'}\rangle = t^{\sum_{k=N+1}^{n} k-\frac{1}{2}} = t^{\frac{n^2-N^2}{2}},\\ M'_{n, n - N -1} &\equiv \langle n - N - 1|\mathcal{M'}\rangle = t^{\frac{n^2-N^2}{2}} \sum_{k=N+1}^{n} t^{-(k-\frac{1}{2})}\\ &= t^{\frac{1}{2}}\frac{t^{-N} - t^{-n}}{t-1} M'_{n, n-N}\\ &< t^{-N} \frac{t^{\frac{1}{2}}}{t-1} M'_{n, n-N}. \end{align*} 
Note that
\begin{equation} t^{-(k-\frac{1}{2})}\| 1 + st^{-(k-\frac{1}{2})} \mathcal{S} \| \le t^{-(k-\frac{1}{2})} + st^{-2(k-\frac{1}{2})} \equiv c_k,\end{equation} 
so that:
\begin{eqnarray*} &
||st^{-(k - \frac{1}{2})}\mathcal{S} + 1 + t^{k - \frac{1}{2}} \mathcal{S}^\dagger)||<t^{k - \frac{1}{2}}(1+c_{k})
\end{eqnarray*} 
we thus have 
\begin{eqnarray*} &  \| t^{-\sum_{k=N_{0}+1}^{n}(k - \frac{1}{2})} |\mathcal{M'}_n\rangle - |n-{N_{0}}\rangle \| \le \prod_{k=N_{0}+1}^{n}(c_k+1) - 1 < e^{\sum_{k=N_{0}+1}^\infty c_k} -1 \\ & = e^{t^{-2N_{0}} \frac{st + t^{N_{0}+1/2} + t^{N_{0}+3/2}}{t^2 - 1}}-1 < e^{t^{-N_{0}} \frac{3st^{3/2}}{t^2-1}}-1 \equiv f(s,t)^{t^{-N_{0}}} - 1,
\end{eqnarray*} 
and 
\begin{align*} &\| t^{-\sum_{k=1}^{n}(k - \frac{1}{2})} |\mathcal{M}_n\rangle - \vec{\mathcal{K}}\prod_{k=1}^{N_{0}}(st^{-2(k-\frac{1}{2})}\mathcal{S} + t^{-(k-\frac{1}{2})} + \mathcal{S}^\dagger)|n-N_{0}\rangle \| \\ \le &\| \vec{\mathcal{K}}\prod_{k=1}^{N_{0}}(st^{-2(k-\frac{1}{2})}\mathcal{S} + t^{-(k-\frac{1}{2})} + \mathcal{S}^\dagger) \| \| t^{-\sum_{k=N_{0}+1}^{n}(k - \frac{1}{2})} |\mathcal{M'}_n\rangle - |n-N_{0}\rangle \| 
%\\ < &e^{t^{-N}\frac{3st^{\frac{3}{2}}}{t^2-1}} \prod_{k=1}^N(1+c_k) t^{-N}\\ < &(e^{-t^{-N}\frac{3st^{\frac{3}{2}}}{t^2-1}} - 1)e^{\sum_{k=1}^N}c_k\\  < &(e^{t^{-N}\frac{3st^{\frac{3}{2}}}{t^2-1}} - 1)e^{\frac{3st^{\frac{3}{2}}}{t^2-1}}
\\ < &(f(s,t)^{t^{-N_{0}}} - 1) \prod_{k=1}^{N_{0}}(1+c_k) < (f(s,t)^{t^{-N_{0}}} - 1)e^{\sum_{k=1}^{N_{0}}}c_k  < (f(s,t)^{t^{-N_{0}}} - 1)f(s,t). \end{align*} Let 
\begin{equation} M'_{n,m} = \langle m|\vec{\mathcal{K}}\prod_{k=1}^{N_{0}}(st^{-2(k-\frac{1}{2})}\mathcal{S} + t^{-(k-\frac{1}{2})} + \mathcal{S}^\dagger)|n-N_{0}\rangle, \end{equation} then clearly $M'_{n,m} = 0$ for $m < n - 2{N_{0}}$. 
If we choose 
\begin{equation} N_{0}= \begin{cases} 0 &  f(s,t) < \frac{1+ \sqrt{5}}{2}, \\ -\frac{\log \frac{\log (f^{-1}(s,t) + 1)}{\log f(s,t)}}{\log t}, &\text{otherwise},\end{cases} \end{equation} then 
\begin{equation} \| t^{-\frac{n^2}{2}} |\mathcal{M}_n\rangle - \sum_{m = n - 2 N_{0}}^n M'_{n,m} |m\rangle \| < 1 = M'_{n,n} \le \sup_{m} M'_{n,m}. \end{equation} Therefore $\exists m^* \in [n-2N_{0},n]$, such that $M_{n,m^*} \geq M_{n,m}$ for all $m$. \end{proof}
Lemma \ref{Lemma:mstar} shows that the peak of the $M_{n,m}$ distribution is always within a finite distance from $n$. Essentially,  the bulk of the distribution travels with velocity $1$.

We are now in position to complete the proof of our theorem \ref{Thm:linearEntropy}:
\begin{proof} ({\it Theorem \ref{Thm:linearEntropy}}) 

We separate a linear term from $S_{n}$ as follows (below we supress the $n$ index in $M_{n,m}$):  \begin{eqnarray} & \nonumber S_n = -\sum_{m=0}^{n}s^m p_m\log\frac{M_m^2}{\sum_{m'=0}^ns^{m'}M_{m'}^2}> -\sum_{m=0}^n s^m p_m\log\frac{M_m^2}{s^mM_m^2}\\ \nonumber &= \sum_{m=0}^n s^m p_m m\log s= \sum_{l=0}^n s^{n-l}p_{n-l}(n - l)\log s = \\ &= n\log s - \log s \sum_{l=0}^n\frac{s^{n-l}M_{n-l}^2}{{\sum_{m'=0}^ns^{m'}M_{m'}^2}}l \label{lower entropy bound}
 \end{eqnarray} 
 Taking $m^{*}$ such that $\sup_{m} M_{n,m} = M_{n,m^*}$ and using lemma \ref{Lemma:mstar}, we see that 
 \begin{align*} 
   &   \sum_{l=0}^n\frac{s^{n-l}M_{m^*}^2}{{\sum_{m'=0}^ns^{m'}M_{m'}^2}}l <\sum_{l=0}^n\frac{s^{n-l}M_{m^*}^2}{{ s^{m^{*}}M_{m^{*}}^2}}l= s^{n - m^*} \sum_{l=0}^ns^{-l}l \\ &< s^{2N_{0}} \sum_{l=0}^ns^{-l}l < s^{2N_{0}} \sum_{l=0}^{\infty}s^{-l}l =    \frac{s^{2N_{0}+1}}{(s-1)^2} . \end{align*} Therefore, the remainder term on the right hand side of \eqref{lower entropy bound} is bounded. \end{proof}

\subsection{Entanglement entropy is bounded when $t<1$}

When $t<1$ we expect the Motzkin paths with the lowest area to be exponentially preferred.  In particular, the flat Motzkin path that has zero area has a vanishing contribution to entropy, and thus we expect the entanglement entropy to be substantially reduced. In fact, it turns out that the for any value $t<1$ the entropy is bounded, independently of the size of the system $2n$:

\begin{Thm} When $0<t<1, s\geq 1$, there exists a constant $C(s,t)$ independent of the system size $n$, such that for any $n$, $S_n < C(s,t)$. \label{tless1}\end{Thm}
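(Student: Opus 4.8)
The plan is to reduce the theorem to a single uniform-in-$n$ tail estimate on the Schmidt weights. Introduce the height distribution $q_{n,m}=s^m p_{n,m}$, which is a genuine probability distribution on $m$ since $\sum_m s^m p_{n,m}=1$. Substituting $p_{n,m}=s^{-m}q_{n,m}$ into the entropy formula splits it as
\[ S_n=-\sum_m q_{n,m}\log q_{n,m}+\log s\sum_m m\,q_{n,m}=H(q_n)+\log s\,\langle m\rangle_n, \]
the Shannon entropy of the height distribution plus $\log s$ times its mean height. By the maximum-entropy property of the geometric distribution on $\mathbb{Z}_{\ge 0}$, $H(q_n)$ is bounded by an increasing function of $\langle m\rangle_n$, so it suffices to bound $\langle m\rangle_n$ uniformly in $n$; and for that it is enough to exhibit an $n$-independent, summable envelope $q_{n,m}\le Q_m$ with $\sum_m m\,Q_m<\infty$. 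Since $N_n=\sum_{m'}s^{m'}M_{n,m'}^2\ge M_{n,0}^2$ and the flat walk alone gives $M_{n,0}\ge 1$, we have $q_{n,m}\le s^m\,(M_{n,m}/M_{n,0})^2$, so the entire problem reduces to a uniform bound on the ratio $M_{n,m}/M_{n,0}$.

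I would control this ratio through the fixed transfer operator already appearing in the $t>1$ analysis. The recurrence \eqref{recurrel} reads $|\mathcal M_{n+1}\rangle=T|\mathcal M_n\rangle$ with the $n$-independent tridiagonal operator $T=t^{\mathcal H}(s\sqrt t\,\mathcal S+1+\tfrac{1}{\sqrt t}\mathcal S^\dagger)$, equivalently $T|m\rangle=s t^{m-1/2}|m-1\rangle+t^m|m\rangle+t^{m+1/2}|m+1\rangle$, so that $M_{n,m}=\langle m|T^n|0\rangle$. The decisive feature of the regime $t<1$ is that every matrix element of $T$ decays geometrically in $m$; hence $\sum_{m,m'}|\langle m'|T|m\rangle|^2<\infty$, so $T$ is Hilbert--Schmidt and in particular compact on $\ell^2$, while remaining positivity-preserving and irreducible (its off-diagonal entries are strictly positive). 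A direct computation moreover shows that $u_m=t^{m^2/2}$ is a supersolution, $(Tu)_m=(1+t^m+st^{2m+1})\,u_m\le(2+st)\,u_m$, which already displays the super-exponential profile we are aiming for.

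The core of the argument is then Perron--Frobenius/Krein--Rutman applied to the compact, positive, irreducible operator $T$: its spectral radius $\lambda>0$ is a simple, strictly dominant eigenvalue (dominance being strict because the diagonal is positive) with a strictly positive eigenvector $v$, and $T^n/\lambda^n$ converges in norm to the associated rank-one projector. Analyzing the eigen-recurrence $t^{m-1/2}v_{m-1}+t^m v_m+st^{m+1/2}v_{m+1}=\lambda v_m$ at large $m$ forces $v_m/v_{m-1}\sim t^{m-1/2}/\lambda$, so $v_m$ decays super-exponentially, of order $t^{m^2/2}$. Because $T$ preserves the positive cone, the pointwise inequality $|0\rangle\le v/v_0$ together with $T^n v=\lambda^n v$ yields the comparison upper bound $M_{n,m}\le\lambda^n v_m/v_0$; conversely strict dominance gives $M_{n,0}=\langle 0|T^n|0\rangle\ge c\,\lambda^n$ for some $c>0$ (a positive sequence with positive limit). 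Dividing, $M_{n,m}/M_{n,0}\le v_m/(c\,v_0)$ uniformly in $n$, whence $q_{n,m}\le c^{-2}v_0^{-2}\,s^m v_m^2=:Q_m$ is the desired $n$-independent, super-exponentially small envelope. Feeding $Q_m$ back into the first paragraph bounds $\langle m\rangle_n$ and $H(q_n)$ by constants depending only on $s,t$, giving $S_n<C(s,t)$.

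The step I expect to be the main obstacle is precisely the uniform ratio bound, and specifically the need for the \emph{sharp} growth constant. The elementary supersolution $u_m=t^{m^2/2}$ only gives $M_{n,m}\le(2+st)^n t^{m^2/2}$, and since the true growth rate $\lambda$ of $M_{n,0}$ is strictly smaller than $2+st$, this crude estimate does not control the ratio at all. What is required is to identify $\lambda$ as a genuine dominant eigenvalue and to establish the two matching one-sided estimates $M_{n,m}\le C\lambda^n v_m$ and $M_{n,0}\ge c\lambda^n$ with the same $\lambda$. This is exactly where compactness of $T$—available only because $t<1$—is indispensable, and verifying irreducibility, strict (aperiodic) dominance, and the large-$m$ asymptotics of $v$ is the part demanding the most care; an alternative purely inductive treatment of \eqref{recurrel} would have to reproduce this same sharp constant and is correspondingly delicate in its handling of the subleading terms.
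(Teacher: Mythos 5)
Your proposal is correct in outline and arrives at the same essential reduction as the paper: writing $S_n=H(q_n)+\log s\,\langle m\rangle_n$ with $q_{n,m}=s^m p_{n,m}$ (this is exactly the paper's $\tilde p_{n,m}$) and then bounding $q_{n,m}$ by an $n$-independent, rapidly decaying envelope in $m$. Where you diverge is in how that envelope is produced. You route everything through the spectral theory of the transfer operator $T$: Hilbert--Schmidt compactness for $t<1$, a Krein--Rutman/irreducibility argument to obtain a simple, strictly dominant eigenvalue $\lambda$ with positive eigenvector $v_m\sim t^{m^2/2}\lambda^{-m}$, and the two matched one-sided bounds $M_{n,m}\le\lambda^n v_m/v_0$ and $M_{n,0}\ge c\lambda^n$. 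This can be made rigorous, but every step you flag as delicate really does need work: the positive cone of $\ell^2(\mathbb{Z}_{\ge 0})$ has empty interior, so you need the Banach-lattice form of Krein--Rutman together with irreducibility/primitivity arguments to get strict dominance, plus the norm convergence $T^n/\lambda^n\to P$ with $\langle 0|P|0\rangle>0$ for the lower bound on $M_{n,0}$ (supermultiplicativity alone only gives the upper bound $M_{n,0}\le\lambda^n$), plus a bootstrap for the super-exponential decay of $v$. The paper sidesteps all of this with one elementary observation you did not find: since $|\mathcal M_{n+1}\rangle-|\mathcal M_n\rangle$ has nonnegative coefficients, the normalization $\sum_{m'}\tilde M_{n,m'}^2$ is monotone increasing in $n$ (Lemma~\ref{Lem:MonotonicDenom}); combined with a single application of the recursion, which factors $\tilde M_{n+1,m}$ as $t^m$ times a three-term combination of $\tilde M_{n,\cdot}$ with coefficients at most $\sqrt{s/t}$, this yields directly $\tilde p_{n,m}\le 9(s/t)\,t^{2m}\,\max_{m'}\tilde M_{n,m'}^2/\sum_{m'}\tilde M_{n+1,m'}^2\le 9st^{2m-1}$ (Lemma~\ref{Lemma:pnm bound}). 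Normalizing step by step makes the sharp growth constant $\lambda$ --- which you correctly identify as the main obstacle in your route --- entirely irrelevant. Your approach buys a sharper ($t^{m^2}$-type) envelope and the exact growth rate of $N_n$ as by-products, at the cost of heavy functional-analytic machinery; the paper's argument is self-contained and a few lines long. Both envelopes are summable against $m$, so either closes the proof.
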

\noindent {\it Remark:} Note that theorem holds both for the colored and uncolored case $s=1$.

For the theorem to hold,  the exponential growth in contribution to entropy from the possible colorings of higher paths should be overwhelmed by the exponential price in area. Technically, we need the quantities $M_m^{2}$ to decrease faster than the rate $s^m$ grows in order to make $p_m$ decrease exponentially.

%Actually, for $t>1$, the exponentially increasing factor $s^m$ in $p_m$ is already enough to give linear scaling of entropy even if $M_m$ is uniform, let alone it enhances $p_m$ with larger $m$ by reaching its maximum finite steps away from $n$. However, when $t<1$, we need $M_m$ to decrease not only exponentially, but faster than $s^m$ increase as well so as to make $p_m$ decrease exponentially. So 

To highlight this feature we first define \begin{equation} \tilde{M}_{n,m}=s^{\frac{m}{2}}M_{n,m}~~,~~\tilde{p}_{n,m}=\frac{\tilde{M}_{n+1,m}^2}{\sum_{m=0}^n \tilde{M}_{n+1,m}^2}.\end{equation} Substitution into \eqref{recurrel} gives the relation
\begin{align} %s^{-\frac{m}{2}}\tilde{M}_{n+1,m} &= s^{-\frac{m+1}{2}}s t^{m+\frac{1}{2}}\tilde{M}_{n,m+1} + s^{-\frac{m}{2}}t^m\tilde{M}_{n,m} + s^{-\frac{m-1}{2}} t^{m-\frac{1}{2}}\tilde{M}_{n,m-1}\\ 
\tilde{M}_{n+1,m} &= \sqrt{s} t^{m+\frac{1}{2}}\tilde{M}_{n,m+1} + t^m\tilde{M}_{n,m} + \sqrt{s} t^{m-\frac{1}{2}}\tilde{M}_{n,m-1}\label{tildeRec}, \end{align} for $m\in [1,n-1]$.\\

To prove the entropy is bounded, we need the following lemmas.

\begin{Lem} \begin{equation} \sum_m \tilde{M}_{n+1,m}^2 > \sum_m \tilde{M}_{n,m}^2. \end{equation}\label{Lem:MonotonicDenom} \end{Lem}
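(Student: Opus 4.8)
The plan is to recast the lemma as a statement about the growth of a squared norm under repeated application of a \emph{self-adjoint} operator, and then invoke log-convexity of the resulting sequence. The essential point is that the rescaling $\tilde M_{n,m}=s^{m/2}M_{n,m}$ was engineered precisely so that the recurrence \eqref{tildeRec} becomes symmetric. Encoding the distribution as $|\tilde{\mathcal M}_n\rangle=\sum_{m\ge 0}\tilde M_{n,m}|m\rangle$, relation \eqref{tildeRec} reads $|\tilde{\mathcal M}_{n+1}\rangle=\mathcal A\,|\tilde{\mathcal M}_n\rangle$ with
\[
\mathcal A=\sqrt{s}\,t^{\mathcal H+\frac12}\mathcal S+t^{\mathcal H}+\sqrt{s}\,t^{\mathcal H-\frac12}\mathcal S^\dagger ,
\]
whose matrix is real and tridiagonal, with diagonal entries $t^m$ and off-diagonal entries $\mathcal A_{m,m+1}=\mathcal A_{m+1,m}=\sqrt{s}\,t^{m+\frac12}$; hence $\mathcal A=\mathcal A^\dagger$. (Contrast the \emph{non}-symmetric transfer operator of the untilded problem.) Since $\tilde M_{0,m}=\delta_{m,0}$, we have $|\tilde{\mathcal M}_n\rangle=\mathcal A^n|0\rangle$, and the quantity in the lemma is $c_n:=\sum_m \tilde M_{n,m}^2=\|\mathcal A^n|0\rangle\|^2$, so the claim is simply $c_{n+1}>c_n$ for every $n$.

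First I would record the base data. By self-adjointness $c_n=\langle 0|\mathcal A^{2n}|0\rangle$, so $c_0=1$ and $c_1=\|\mathcal A|0\rangle\|^2=1+st$ (the vector $\mathcal A|0\rangle$ has components $1$ at $m=0$ and $\sqrt{s}\,t^{1/2}$ at $m=1$); in particular $c_1>c_0$. The key step is \emph{log-convexity} of $(c_n)$. Using $\mathcal A=\mathcal A^\dagger$,
\[
\langle \tilde{\mathcal M}_{n-1}|\tilde{\mathcal M}_{n+1}\rangle=\langle 0|\mathcal A^{2n}|0\rangle=c_n ,
\]
and Cauchy--Schwarz gives $c_n=\langle\tilde{\mathcal M}_{n-1}|\tilde{\mathcal M}_{n+1}\rangle\le \|\tilde{\mathcal M}_{n-1}\|\,\|\tilde{\mathcal M}_{n+1}\|=c_{n-1}^{1/2}c_{n+1}^{1/2}$, i.e.\ $c_n^2\le c_{n-1}c_{n+1}$. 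A one-line induction then finishes the argument: $c_0=1<c_1$, and if $c_{n-1}<c_n$ then $c_{n+1}\ge c_n^2/c_{n-1}=c_n\,(c_n/c_{n-1})>c_n$. (Equivalently, log-convexity makes the increments $\log c_{n+1}-\log c_n$ non-decreasing, and the first increment $\log(1+st)$ is positive.) All $c_n>0$ because $\tilde M_{n,0}\ge 1$, so the divisions and logarithms are legitimate.

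The step I expect to require the most care is \emph{not} the inequality itself but justifying that $\mathcal A$ may be treated as a genuine self-adjoint operator with the stated identities. The safe route is to truncate to the finite box $m\in\{0,\dots,N\}$ with $N\ge n+1$: the top-left block of $\mathcal A$ is a finite real symmetric matrix, and because a unit spike started at $m=0$ reaches height at most $k$ after $k$ steps, $\mathcal A_N^{\,k}|0\rangle$ coincides with $|\tilde{\mathcal M}_k\rangle$ for all $k\le n+1$. The lemma for index $n$ is thus a purely finite-dimensional statement to which the argument above applies verbatim.

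It is worth emphasizing why the orbit structure is indispensable, since this is the real conceptual content. The bound $\|\mathcal A\vec v\|\ge\|\vec v\|$ is \emph{false} for general nonnegative $\vec v$: a single high spike $|M\rangle$ satisfies $\|\mathcal A|M\rangle\|^2=t^{2M}\bigl(1+s(t+t^{-1})\bigr)\to0$ as $M\to\infty$ when $t<1$. Hence no termwise comparison or matrix-positivity argument for $\mathcal A^2-I$ can succeed; the monotonicity holds only along the trajectory $\mathcal A^n|0\rangle$, and it is exactly the log-convexity of the moments $c_n=\langle 0|\mathcal A^{2n}|0\rangle$ that captures this orbit-specific feature.
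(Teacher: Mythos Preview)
Your argument is correct, but it takes a genuinely different route from the paper's own proof. The paper does not exploit the symmetry of the tilded transfer matrix at all: instead it works with the \emph{untilded} product formula \eqref{korder}, peels off the rightmost factor $(st^{-(n+\frac12)}\mathcal S+1+t^{n+\frac12}\mathcal S^\dagger)$ acting on $|0\rangle$, and observes that the remaining piece has strictly positive components at every $|m\rangle$ with $0\le m\le n+1$. This yields the stronger componentwise statement $M_{n+1,m}>M_{n,m}$ (hence $\tilde M_{n+1,m}>\tilde M_{n,m}$) for all $m$, from which the lemma follows by summing squares. Your approach instead recognizes that the rescaling $\tilde M_{n,m}=s^{m/2}M_{n,m}$ makes the transfer matrix $\mathcal A$ self-adjoint, so that $c_n=\langle 0|\mathcal A^{2n}|0\rangle$ is a moment sequence, and then uses Cauchy--Schwarz log-convexity together with the base case $c_1=1+st>1=c_0$. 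What the paper's route buys is a sharper pointwise inequality with essentially no analysis; what yours buys is a clean structural explanation (moments of a self-adjoint operator are log-convex) that would survive even if componentwise monotonicity failed, and it makes explicit why the inequality is an \emph{orbit} property rather than an operator inequality $\mathcal A^\dagger\mathcal A\ge I$, which indeed fails here for $t<1$.
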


\begin{proof} From \eqref{korder}, we have \begin{align*} |\mathcal{M}_{n+1}\rangle &= \vec{\mathcal{K}}\prod_{k=1}^{n+1}  (st^{-(k - \frac{1}{2})} \mathcal{S} + 1 + t^{k - \frac{1}{2}} \mathcal{S}^\dagger) |0\rangle \\ &= \vec{\mathcal{K}}\prod_{k=1}^n  (st^{-(k - \frac{1}{2})} \mathcal{S} + 1 + t^{k - \frac{1}{2}} \mathcal{S}^\dagger) (st^{-(n + \frac{1}{2})} \mathcal{S} + 1 + t^{n + \frac{1}{2}} \mathcal{S}^\dagger)|0\rangle, \\ &= |\mathcal{M}_n\rangle + \vec{\mathcal{K}}\prod_{k=1}^n  (st^{-(k - \frac{1}{2})} \mathcal{S} + 1 + t^{k - \frac{1}{2}} \mathcal{S}^\dagger) (st^{-(n + \frac{1}{2})} \mathcal{S} + t^{n + \frac{1}{2}} \mathcal{S}^\dagger)|0\rangle.\end{align*} The last term on the RHS of the equation contains non-zero contributions for all states $|m\rangle$, with $m=0,..n+1$, and we have: \begin{align*} M_{n+1,m} &> M_{n,m},\\ \tilde{M}_{n+1,m} &> \tilde{M}_{n,m} \qquad \forall m\geq 0,n\geq 1. \end{align*} And the Lemma follows. \end{proof}
Next we establish the following bound on $\tilde{p}_{n,m}$:
\begin{Lem} \begin{equation} \tilde{p}_{n,m} < 9 s t^{2m-1}.\end{equation}  \label{Lemma:pnm bound} \end{Lem}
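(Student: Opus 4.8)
The plan is to bound the denominator of $\tilde p_{n,m}$ from below by the single term $\tilde M_{n+1,m-1}^2$, which reduces the claim to a one-step ratio estimate. Writing $r_{n,m}:=\tilde M_{n,m}/\tilde M_{n,m-1}$ (well defined and positive for $1\le m\le n$), it suffices to prove a bound of the form $r_{n,m}\le C\sqrt{s}\,t^{m-1/2}$, with $C$ an $n$-independent constant, for all $n$ and all $1\le m\le n$; then $\tilde p_{n,m}\le r_{n+1,m}^2\le C^2 s\,t^{2m-1}$. The case $m=0$ is trivial since $\tilde p_{n,0}\le 1<9s/t$.

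The key observation, read off from \eqref{tildeRec}, is that in the bulk the ratio obeys a contraction-with-averaging. Matching the three summands of $\tilde M_{n+1,m}$ term by term against the three summands (descending, flat, ascending) of $\tilde M_{n+1,m-1}$, one finds that for $2\le m\le n-1$ the ratio $r_{n+1,m}$ equals $t$ times a convex combination of $r_{n,m-1},r_{n,m},r_{n,m+1}$, the weights being the three nonnegative terms of $\tilde M_{n+1,m-1}$. Hence $r_{n+1,m}\le t\max(r_{n,m-1},r_{n,m},r_{n,m+1})$. I would then prove by induction on $n$ that $r_{n,m}\le c\,t^{m-1}$ with $c=\sqrt{s}\,t^{1/2}/(1-t)$: because $t<1$, the largest of the three inductive bounds is $c\,t^{m-2}$, and multiplying by $t$ reproduces $c\,t^{m-1}$ exactly, so the bulk step closes with no loss.

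The substance of the argument is the two boundaries. At $m=1$ the ascending term $\sqrt{s}\,t^{1/2}\tilde M_{n,0}$ of $\tilde M_{n+1,1}$ has no partner in $\tilde M_{n+1,0}=\tilde M_{n,0}+\sqrt{s}\,t^{1/2}\tilde M_{n,1}$ (there is no step below height $0$, cf.\ \eqref{recurrel}), so it injects an additive constant, $r_{n+1,1}\le tc+\sqrt{s}\,t^{1/2}$. This is exactly why $c$ cannot be smaller than $\sqrt{s}\,t^{1/2}/(1-t)$, the fixed point of $x\mapsto tx+\sqrt{s}\,t^{1/2}$; with this choice $tc+\sqrt{s}\,t^{1/2}=c$ and the step closes. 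At the opposite end $m=n+1$ the crude estimate only gives the non-decaying $r_{n+1,n+1}\le\sqrt{s}\,t^{1/2}$; the remedy is to retain the \emph{raising} term in the denominator instead, which yields $r_{n+1,n+1}\le t\,r_{n,n}$ and hence $r_{n,n}\le c\,t^{n-1}$ by induction. The case $m=n$ is like the bulk but with the descending term absent, which only helps, and the small-$n$ degeneracies (where an out-of-range $r_{n,m+1}$ multiplies a vanishing $\tilde M$) are harmless. Collecting the four cases completes the induction.

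The main obstacle is getting this boundary behaviour right: a uniform-in-$m$ ratio bound, which is all the naive estimates deliver, is fatally too weak, since telescoping it down to $m=0$ accumulates a factor $s^{m}$ rather than the single power of $s$ demanded by the target. The whole content of the lemma is that $r_{n,m}$ decays geometrically in $m$ at rate $t$, and this decay only becomes visible once one writes $r_{n+1,m}$ as $t$ times an average and checks that the diagonal corner $m=n+1$ also contracts. I note that this route produces the clean constant $\tilde p_{n,m}\le s\,t^{2m-1}/(1-t)^2$; the stated form $9s\,t^{2m-1}$ is the case $(1-t)^{-2}\le 9$, and in general the constant may be replaced by any $n$-independent $C(s,t)$, which is all the subsequent entropy estimate requires.
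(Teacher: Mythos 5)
Your argument is essentially correct, but it takes a genuinely different --- and considerably heavier --- route than the paper, and it does not quite deliver the constant in the statement. The paper's proof is much shorter: it expands only the \emph{numerator} $\tilde M_{n+1,m}$ via \eqref{tildeRec}, factors $t^{2m}$ out of the resulting three-term sum, bounds that sum by $3\sqrt{s/t}\max\{t\tilde M_{n,m+1},\sqrt{t/s}\,\tilde M_{n,m},\tilde M_{n,m-1}\}$ (whence the $9s/t$), and then lower-bounds the denominator by a \emph{single} level-$n$ term using the monotonicity $\sum_{m}\tilde M_{n+1,m}^2>\sum_{m}\tilde M_{n,m}^2\ge\max_m\tilde M_{n,m}^2$ from Lemma \ref{Lem:MonotonicDenom}. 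In other words, the denominator never needs to be compared to the \emph{adjacent} term $\tilde M_{n+1,m-1}^2$, so the entire apparatus you build --- the ratio variables $r_{n,m}$, the mediant/convex-combination identity, the double induction with four boundary cases --- is avoidable; the geometric decay in $m$ comes for free from the explicit $t^{2m}$ prefactor in the recursion, not from a contraction property of adjacent ratios. That said, your induction does appear to close as written (the bulk step, the fixed point $c=\sqrt{s}\,t^{1/2}/(1-t)$ at $m=1$, and keeping the raising term at the top edge are all checked correctly), and it yields the genuinely finer structural fact that $\tilde M_{n,m}/\tilde M_{n,m-1}\le c\,t^{m-1}$ uniformly in $n$, which the paper's argument does not give. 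The price is the constant: you obtain $\tilde p_{n,m}\le s\,t^{2m-1}/(1-t)^{2}$, which is weaker than the stated $9s\,t^{2m-1}$ whenever $t>2/3$ and blows up as $t\to1$. You flag this yourself, and indeed any $n$-independent $C(s,t)\,t^{2m}$ suffices for Theorem \ref{tless1}, but strictly speaking the lemma as stated is not established by your argument for $t$ near $1$; the paper's max-plus-monotonicity step is what secures the uniform factor of $9$.
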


\begin{proof} By definition of $\tilde{p}_{n,m}$, and the recursion relation \eqref{tildeRec}, 
\begin{align*} \tilde{p}_{n,m} &= \frac{\tilde{M}_{n+1,m}^2}{\sum_{m=0}^n \tilde{M}_{n+1,m}^2} = \frac{(\sqrt{s} t^{m+\frac{1}{2}}\tilde{M}_{n,m+1} + t^m\tilde{M}_{n,m} + \sqrt{s} t^{m-\frac{1}{2}}\tilde{M}_{n,m-1})^2}{\sum_{m=0}^n \tilde{M}_{n+1,m}^2} \\ &= \frac{t^{2m}(\sqrt{s} t^{\frac{1}{2}}\tilde{M}_{n,m+1} + \tilde{M}_{n,m} + \sqrt{s} t^{-\frac{1}{2}}\tilde{M}_{n,m-1})^2}{\sum_{m=0}^n \tilde{M}_{n+1,m}^2} \\ &\le t^{2m} \frac{(3 \sqrt{s\over t}max\{t \tilde{M}_{n,m+1}, \sqrt{\frac{t}{s}} \tilde{M}_{n,m}, \tilde{M}_{n,m-1}\})^2}{\sum_{m=0}^n \tilde{M}_{n+1,m}^2} \\ &\le 9 t^{2m} \frac{s}{t} \frac{max\{ \tilde{M}_{n,m+1}^2, \tilde{M}_{n,m}^2, \tilde{M}_{n,m-1}^2\}}{\sum_{m=0}^n \tilde{M}_{n+1,m}^2} < 9 t^{2m} {\frac{s}{t}}. 
\end{align*} Lemma \ref{Lem:MonotonicDenom} was used in the last line.\end{proof}

We now have the ingredients to prove theorem \ref{tless1}:
\begin{proof} ({\it Theorem  \ref{tless1}})  Using Lemma \ref{Lemma:pnm bound} we see that when \begin{equation} m > m_0 \equiv \Big[\frac{\log(\frac{1}{9e} {\frac{t}{s}})}{2 \log{t}}\Big] + 1, \end{equation} we have \begin{equation} \tilde{p}_{n,m} < 9 s t^{2m-1}<{1\over e}. \end{equation} 
It is easy to check that the function $-x\log(x)$ is monotonically increasing when $x\in (0, \frac{1}{e})$, in other words, for $m>m_0$, 
\begin{eqnarray} \tilde{p}_{n,m}< 9 s t^{2m-1}<{1\over e}~\Longrightarrow~ -\tilde{p}_{n,m}\log \tilde{p}_{n,m} < -9 s t^{2m-1} \big(\log( {\frac{9s}{t}}) + 2m\log t\big). \end{eqnarray} 
Therefore \begin{align*} S_n &= -\sum_{m=0}^n \tilde{p}_{n,m}\log \tilde{p}_{n,m} + \log s \sum_{m=0}^n \tilde{p}_{n,m} m\\ &< -\sum_{m=0}^{m_0} \tilde{p}_{n,m}\log \tilde{p}_{n,m} -\sum_{m=m_0+1}^\infty 9 s t^{2m-1}   \big(\log( {\frac{9s}{t}}) + 2m\log t\big) + \log s \sum_{m=0}^\infty  9 s t^{2m-1} m\\ &< \frac{m_0+1}{e} - \frac{9s t^{2m_0+1}}{1-t^2}\log( {\frac{9s}{t}}) - \frac{18 s t^{2 m_0+2} (m_0(1- t^2) +1)}{(t^2-1)^2} \log t\\ &\quad+ \frac{9s t}{(t^2-1)^2} \log s \equiv C(s,t),
 \end{align*}
where we used $\sup_{x\in(0,1)}-x log(x)=e^{-1}$ for entropy terms with $m\leq m_{0}$ in the last inequality.
\end{proof}

\section{Summary and Open Questions}
%Bravyi and Gosset in [Gosset and Bravyi] provided a complete classification of gapped and gapless phases of frustration-free translation-invariant spin-1/2 chains with nearest-neighbor interactions and stated that it was a challenging open problem to do the same thing for qudits, or spin chains with a Hilbert space of dim $ d\geq3 $. Our model may potentially be the very first step toward generalizing the results in [Gosset and Bravyi] to qudits and constructing a phase diagram for $d$-dimensional spin chains with $d\ge 3$.

%The scaling behavior of the variance and that of the entanglement entropy,  was shown to be the same in many different systems []. The scaling behavior of the fluctuations of a random variable are often represented through the Full Counting Statistics (FCS) method. In ~\cite{RS2012UnColored} as well as in ~\cite{movassagh2015power}, the random variable is the height of the Motzkin path at the middle of the spin chain and the probability density is that which they derived for the height of the path. 
In this paper we have presented a continuous family of Hamiltonians with an exactly solvable, frustration free and non-degenerate ground state. In the colored model, the family features an exotic phase transition between volume and boundary entropies with transitions through a $\sqrt{n}$ area law violation, that has an intuitive interpretation in terms of Motzkin paths. 
One can apply the ideas presented herein, i.e. searching for quantum phase transitions associated with frustration free deformation, into other interesting frustration-free models. In particular the model of   \cite{salberger2016fredkin} may be directly amenable to an analogous continuous deformation, where Dyck paths are weighted rather than Motzkin paths to obtain volume scaling of entanglement entropy. Other important questions include the search for translationally invariant Hamiltonians with similar properties: How to get rid of the boundary terms while leaving the Hamiltonian frustration free and non-degenerate? It is also of interest to explore whether there is a way to make the entanglement entropy scale with a larger linear coefficient. More precisely, for $t>1$, we find that $S_n\propto log({d-1\over 2})n$ where $ d$ is the dimension of the Hilbert space of each individual spin. Is it possible to construct a model where $S_n\propto c n$ with $c>log({d-1\over 2}) $ without increasing the interaction range in the Hamiltonian?

While we have been mostly interested in entanglement scaling, it is worthwhile to study more aspects of the model. In particular, the scaling of the spectral gap with the length of the chain is an important quantity that indicates how fast the system becomes gapless in the thermodynamic limit. It has been shown in ~\cite{movassagh2015power} that the spectral gap for the colored Motzkin path model has an upper bound that scales as $ O(n^{-2}) $ and a lower bound that scales as $ O(n^{-c}) $ where $ c>1 $. Since in our model, the entanglement entropy scales linearly, we expect the upper bound of the spectral gap to scale even faster. In addition, we expect that our system opens a gap  in the region where the entropy is $O(1)$, however, this conjecture requires a separate treatment (the entanglement entropy being bounded does not imply that the Hamiltonian is gapped).

\section{Acknowledgements}
It is a pleasure to thank R. Movassagh, P. Fendley and P. Arnold for valuable discussions.
The work of IK was supported by NSF grant DMR-1508245.
\bibliographystyle{unsrt}
\bibliography{Motzkin}

\end{document}